\newcommand{\corr}[1]{\langle {#1} \rangle}
\newcommand{\Corr}[1]{\biggl\langle {#1} \biggr\rangle}
\newcommand{\e}{\epsilon}
   \newcommand{\bs}{{\bf s}}
  \newcommand{\bp}{{\bf p}}
 \newcommand{\bP}{\mathbb{P}}
 \newcommand{\F}{\mathcal{F}}
\newcommand{\ZZ}{\mathbb{Z}}
\newcommand{\pd}{\partial}
\newcommand{\p}{\partial}
\newcommand{\tr}{{\rm tr}}
\newcommand{\nn}{\nonumber}
\newcommand{\Mbar}{\overline{\mathcal M}}
\newcommand{\half}{\frac{1}{2}}
\newcommand{\be}{\begin{equation}}
\newcommand{\beq}{\begin{equation}}
\newcommand{\ee}{\end{equation}}
\newcommand{\eeq}{\end{equation}}
\newcommand{\bea}{\begin{eqnarray}}
\newcommand{\eea}{\end{eqnarray}}
\newcommand{\ben}{\begin{eqnarray*}}
\newcommand{\een}{\end{eqnarray*}}
\numberwithin{equation}{section}
\newtheorem{Theorem}{Theorem}[section]
\newtheorem*{Theorem*}{Theorem}
\newtheorem{Corollary}[Theorem]{Corollary}
\newtheorem{Lemma}[Theorem]{Lemma}
\newtheorem{Proposition}[Theorem]{Proposition}
 { \theoremstyle{definition}

\newtheorem{Remark}[Theorem]{Remark} }
\begin{document}

\newcommand{\arXivNumber}{2311.06506}

\renewcommand{\PaperNumber}{068}

\FirstPageHeading

\ShortArticleName{From Toda Hierarchy to KP Hierarchy}

\ArticleName{From Toda Hierarchy to KP Hierarchy}

\Author{Di YANG~$^{\rm a}$ and Jian ZHOU~$^{\rm b}$}

\AuthorNameForHeading{D.~Yang and J.~Zhou}

\Address{$^{\rm a)}$~School of Mathematical Sciences, University of Science and Technology of China,\\
\hphantom{$^{\rm a)}$}~Hefei 230026, P.R.~China}
\EmailD{\href{mailto:diyang@ustc.edu.cn}{diyang@ustc.edu.cn}}

\Address{$^{\rm b)}$~Department of Mathematical Sciences, Tsinghua University,
Beijing 100084, P.R.~China}
\EmailD{\href{mailto:jianzhou@mail.tsinghua.edu.cn}{jianzhou@mail.tsinghua.edu.cn}}

\ArticleDates{Received October 08, 2024, in final form July 27, 2025; Published online August 09, 2025}

\Abstract{Using the matrix-resolvent method and a formula of the second-named author on the $n$-point function for a KP tau-function, we show that the tau-function of an arbitrary solution to the Toda lattice hierarchy is a KP tau-function. We then generalize this result to tau-functions for the extended Toda hierarchy (ETH) by developing the matrix-resolvent method for the ETH. As an example the partition function of Gromov--Witten invariants of the complex projective line is a KP tau-function, and an application on irreducible representations of the symmetric group is obtained.}

\Keywords{Toda hierarchy; KP hierarchy; matrix-resolvent method; complex projective line; Gromov--Witten invariant}

\Classification{37K10; 05E05; 14N35; 53D45; 05E10}

\section{Introduction}
The Kadomtsev--Petviashvili (KP) hierarchy and
the Toda lattice hierarchy are two important integrable hierarchies. In this
paper, we will show that the tau-function of an arbitrary solution to the Toda lattice hierarchy
gives an infinite family of tau-functions of the KP hierarchy.
This is achieved by combining two different results:
a formula~\cite{Zhou-Emergent} of the second-named author on the $n$-point function for a KP tau-function in the big cell,
and a work~\cite{Y} of the first-named author on the $n$-point function for a Toda tau-function.

The KP hierarchy is an infinite family of equations with infinitely many unknown functions,
which can be written using the Lax pair formalism as follows:
\beq\label{KPhierarchy30}
\frac{\p L_{\rm KP}}{\p T_k} = \bigl[\bigl(L_{\rm KP}^k\bigr)_+, L_{\rm KP}\bigr], \qquad k\ge1,
\eeq
where
\[
L_{\rm KP} = \p + \sum_{j\ge1} u_j \p^{-j}, \qquad \p=\p/\p X,
\]
is a pseudodifferential operator, called the {\it Lax operator}.
For details about pseudodifferential operators and their operations see for example~\cite{Dickey}.
The independent variables $T_1,T_2,T_3,\dots$ are called {\it times}.
Since $\p L_{\rm KP}/\p T_1 = \p L_{\rm KP}/\p X$, we identify $T_1$ with $X$.

Let ${\bf T}=(T_1,T_2,\dots)$ denote the infinite vector of KP times.
It is well known (see, e.g.,~\cite{Dickey}) that an arbitrary solution $(u_1({\bf T}), u_2({\bf T}),\dots)$
to the KP hierarchy~\eqref{KPhierarchy30} can be compactly represented by a single function $\tau=\tau({\bf T})$
called the {\it tau-function} as
\[
u_1({\bf T}) = \frac{\p^2 \log \tau}{\p T_1^2}, \qquad u_2({\bf T}) = \frac12 \frac{\p^2 \log \tau}{\p T_1 \p T_2} - \frac12 \frac{\p^3 \log \tau}{\p T_1^3}, \qquad \dots,
\]
and the tau-function~$\tau$ satisfies the {\it Hirota bilinear identities} given by
\beq\label{bilinearidH}
{\rm res}_{\lambda=\infty} \tau\bigl({\bf T} - \bigl[\lambda^{-1}\bigr]\bigr) \tau\bigl({\bf T'} + \bigl[\lambda^{-1}\bigr]\bigr) {\rm e}^{\sum_{k\ge1} (T_k-T_k') \lambda^k} {\rm d}\lambda=0.
\eeq
Here \smash{$\bigl[\lambda^{-1}\bigr]:=\bigl(\frac1\lambda, \frac1{2\lambda^2}, \frac 1{3\lambda^3}, \dots\bigr)$}.
Equivalence between~\eqref{bilinearidH} and~\eqref{KPhierarchy30} is a standard result in the theory of integrable systems; see, for example,~\cite{Dickey}.
Equation~\eqref{bilinearidH} itself gives the defining equations for a KP tau-function (namely, we do not
 need to start with a solution $(u_1({\bf T}), u_2({\bf T}),\dots)$ to the KP hierarchy~\eqref{KPhierarchy30}).

Let $\Lambda\colon f(x)\mapsto f(x+\e)$ be the shift operator and
\begin{gather*}
L := \Lambda + v(x) + w(x) \Lambda^{-1}
\end{gather*}
the {\it Lax operator}. Here $\e$ is a parameter.
The {\it Toda lattice hierarchy} (also known as the 1d Toda chain or 1-Toda hierarchy)
can be defined using the Lax pair formalism as
\beq\label{TLH}
\e\frac{\p L}{\p m_i} = \frac1{(i+1)!} \bigl[\bigl(L^{i+1}\bigr)_+,L\bigr], \qquad i\geq0.
\eeq
Here, for a difference operator~$P$ written in the form $P=\sum_{k\in \ZZ} P_k \Lambda^k$,
$P_+$ is defined as $\sum_{k\geq0} P_k \Lambda^k$. We also denote ${\bf m}=(m_0,m_1,m_2,\dots)$.

It is known~\cite{UT} that an arbitrary tau-function $\tau(N,{\bf x},{\bf y})$ of the 2-Toda hierarchy~\cite{UT}
is a~KP~tau-function with respect to either ${\bf x}$ or~${\bf y}$ as KP times (fixing the others
as parameters). The~following Theorem~\ref{thm:toda-kp},
which could essentially be deduced from~\cite{UT} (see Remark~\ref{remark12} below), gives a similar statement for the 1-Toda hierarchy. Let $(v(x,{\bf m};\e), w(x,{\bf m};\e))$ be an arbitrary power-series-in-${\bf m}$ solution
to the Toda lattice hierarchy
with coefficients being in a ring $V$ of functions of~$x$ closed under shifting~$x$
by $\pm \e$ (e.g., $V$ could be $\mathbb{C}[x,\e]$),
and $\tau(x,{\bf m};\e)$ the tau-function~\cite{DY} of this solution.
\begin{Theorem} \label{thm:toda-kp}
The tau-function $\tau(x,{\bf m};\e)$ for the Toda lattice hierarchy
is a tau-function of the KP hierarchy for any $x$ and $\e$,
where ${\bf m}$ and the KP times ${\bf T}$ are related by
\be\label{tT30}
m_i = (i+1)! \, \e \, T_{i+1},\qquad i\ge0.
\ee
\end{Theorem}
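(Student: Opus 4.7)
The plan is to verify that $\tau(x,\mathbf{m};\epsilon)$ is a KP tau-function by matching the connected $n$-point generating functions produced by the two approaches cited in the abstract: the Toda matrix-resolvent formula of \cite{Y} and Zhou's KP characterization from \cite{Zhou-Emergent}. Concretely, I would fix $x$ and $\epsilon$ and view $\tau(x,\mathbf{m};\epsilon)$ as a function of $\mathbf{T}$ through the linear change~\eqref{tT30}. For each $n \ge 1$, introduce the connected $n$-point generating function in the KP times
\[
F_n(\lambda_1,\dots,\lambda_n) \;=\; \sum_{k_1,\dots,k_n \geq 1} \frac{\partial^n \log\tau}{\partial T_{k_1}\cdots\partial T_{k_n}}\prod_{j=1}^{n}\frac{1}{\lambda_j^{k_j+1}},
\]
together with the customary correction terms for $n=1,2$. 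By \cite{Zhou-Emergent}, a formal series $\tau(\mathbf{T})$ is a KP tau-function in the big cell if and only if each $F_n$ admits a specific cyclic-trace representation built from the affine Sato-Grassmannian / Baker--Akhiezer data of $\tau$; equivalently, this is a reformulation of the Hirota bilinear identity~\eqref{bilinearidH}.

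Next, I would translate the Toda side. Because \eqref{tT30} only rescales each $m_i$ by a constant factor $(i+1)!\,\epsilon$, the $\mathbf{T}$-derivatives of $\log\tau$ differ from the $\mathbf{m}$-derivatives only by explicit constants, so the formula of \cite{Y} applies after trivial rescaling. That formula expresses the connected $n$-point functions in $\mathbf{m}$ as cyclic sums of traces of products $R(\lambda_1)\cdots R(\lambda_n)$, where $R(\lambda)$ is the matrix resolvent associated to the Toda Lax operator $L = \Lambda + v + w \Lambda^{-1}$. After the rescaling this yields an explicit matrix-resolvent expression for $F_n$ on the Toda side.

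The core of the proof is then to show that this Toda expression has exactly the shape demanded by Zhou's characterization. I would proceed either by a direct algebraic identification of the Toda matrix resolvent with the KP kernel of \cite{Zhou-Emergent} under the substitution \eqref{tT30}, or by verifying that the two cyclic-trace formulas satisfy the same determining recursion in $n$ once they agree on the one-point functions (which is a matter of comparing the diagonal entries of $R(\lambda)$ to the first logarithmic derivatives of $\tau$). Once matched for all $n$, Zhou's criterion forces $\tau$ to obey \eqref{bilinearidH}, so $\tau(x,\mathbf{m};\epsilon)$ is a KP tau-function in $\mathbf{T}$.

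The main obstacle will be this matching step. The Toda resolvent $R(\lambda)$ naturally encodes the spectral theory of a \emph{difference} operator, whose shift $\epsilon$ and continuous variable $x$ become mere parameters on the KP side, whereas Zhou's kernel lives in the continuous pseudodifferential world of the KP hierarchy. Producing a clean dictionary between the Toda wave functions (in the spectral parameter $\lambda$ of $L$) and the KP Baker--Akhiezer data, respecting the rescaling~\eqref{tT30}, is the substantive technical content. A secondary point is that the $n = 1, 2$ correction terms in the two frameworks take different explicit polynomial forms and must be tracked carefully through the rescaling to avoid spurious mismatches.
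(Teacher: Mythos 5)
Your proposal takes essentially the same approach as the paper: combine the matrix-resolvent $n$-point formula of~\cite{Y} with Zhou's characterization (Theorem~\ref{thm:n-point-KP} and its converse, Corollary~\ref{conversthmA}) under the rescaling~\eqref{tT30}. The ``matching step'' you flag as the main obstacle is precisely the paper's identity~\eqref{eqn:hat-A-26}: the Christoffel--Darboux-type kernel $D(\lambda,\mu;x;\e)$ built from a pair of wave functions of $L_{\rm ini}$, once normalized by the factors $\lambda^{x/\e}$, $\mu^{x/\e}$ and ${\rm e}^{s(x-\e,\e)}$, expands as $\frac1{\lambda-\mu}$ plus a double power series in $\lambda^{-1}$, $\mu^{-1}$ whose coefficients serve as the affine coordinates.
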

\begin{Remark}\label{remark12}
Theorem~\ref{thm:toda-kp} should be known to experts, and could be deduced
from a claim~\cite[p.~30]{UT} (cf.\ \cite{KMMM}) that the Lax representation of the
Toda lattice hierarchy can be obtained from
a reduction of the 2-Toda hierarchy with
the reduction condition
\[
\p/\p x_i = \p /\p y_i, \qquad i\ge1,
\]
up to a quadratic function
of ${\bf x}$, ${\bf y}$. Here, ${\bf x}=(x_1,x_2,\dots)$, ${\bf y}=(y_1,y_2,\dots)$.
(Our formula~\eqref{dpsiabtoda} below should shed light on the Toda chain discussions in~\cite{KMMM}.)
We would like to thank A.~Alexandrov for bringing our attention to the reduction arguments of~\cite{UT}.
We also point out the following well-known fact: the first nontrivial Hirota bilinear equation
of the 2-Toda hierarchy under the reduction condition coincides with that of the 1-Toda hierarchy. Thus, an alternative way to
show the equivalence between the 1-Toda hierarchy and the reduction of the 2-Toda hierarchy is
to compare the Hirota bilinear equations of the two hierarchies~\cite{Milanov, UT}, whose detail might deserve
a further study.
In this work we take a different approach. For the KP hierarchy,
a formula for the connected $n$-point functions was derived by the second-named author using the fermionic approach in~\cite{Zhou-Emergent},
whereas for the Toda lattice hierarchy a formula of the same form was derived by the second-named author~\cite{Y}, but using a totally different method:
the matrix-resolvent method. So we present a~different proof combining these two approaches. An advantage of this proof is that it can be generalized
to the extended Toda hierarchy~\cite{CDZ} whose fermionic approach has not yet appeared. See our Theorem~\ref{thm:toda-kp-ext} below. The main content of Section~\ref{sectionTodaext}
is to generalize the matrix-resolvent method to the case of the extended Toda hierarchy.
\end{Remark}

Let us now recall some earlier results in the literature that motivate us to obtain this result.
It is well known in matrix model theory~\cite{AvM, GMMMO}
that using the theory of orthogonal polynomials,
the GUE partition function $\{Z_N({\bf T};\e)\}_{N \geq 1}$ is a tau-function for the Toda lattice hierarchy.
Here~$N$ is the size of the Hermitian matrix used to define the partition function, and $x=N\e$.
A~perhaps less well-known result is that for each~$N$, $Z_N({\bf T};\e)$ is also a tau-function of the KP hierarchy~\cite{Shaw-Tu-Yen} (see also~\cite{KMMM}).
Both of these results have been revisited recently and some new perspectives naturally arise.

First of all,
in~\cite[Deﬁnition 1.2.2]{DY},
a notion of matrix resolvent for the Toda lattice hierarchy was introduced.
Based on this,
a definition of the tau-function of Dubrovin--Zhang type of the Toda lattice hierarchy was given in~\cite[Definition 1.2.4]{DY}.
Usually a normalization constant is chosen so that the partition function of Gaussian unitary ensemble (GUE) is equal to $1$
when all the coupling constants are set to be equal to zero.
In ~\cite[Appendix A]{DY}, it was shown that after multiplying by a suitable correction factor,
the GUE partition functions give us a tau-function of Dubrovin--Zhang type.
 In~\cite{DY}, it was also shown that the correction factor can be obtained by using the theory of
Toda tau-functions, so this method is applicable to other examples.

Note the major goal of~\cite{DY} is to develop a method
for computing $n$-point function associated with the tau-function
of the Toda lattice hierarchy based on the matrix-resolvent method,
by generalizing the earlier results developed by Bertola, Dubrovin and the first-named
author in the cases of KdV hierarchy~\cite{BDY1} and Drinfeld--Sokolov hierarchies associated
with simple Lie algebras~\cite{BDY2}.
Earlier the second-named author proved an explicit formula~\cite{Zhou-WK} for the Schur expansion
of the Witten--Kontsevich tau-function of the KdV hierarchy.
Balogh and the first-named author interpreted this formula in terms of the
affine coordinates of the Witten--Kontsevich tau-function in~\cite{BY}.
Inspired by~\cite{BDY1},
the second-named author proved a formula~\cite{Zhou-Emergent}
 of $n$-point function associated with
an arbitrary tau-function of the KP hierarchy,
based on the affine coordinates of the element in the Sato Grassmannian
corresponding to the tau-function.
The formula in~\cite{DY} has a difference from the formula in~\cite{Zhou-Emergent}:
In the former matrices of power series are used whereas ordinary power series are used in the latter.
To remedy the difference, the first-named author proved a formula for $n$-point function
using only power series in~\cite{Y} for tau-functions of Toda lattice hierarchy (see also~\cite{DYZ}).

Secondly,
it was shown in~\cite{Zhou-GUE} that the normalized GUE partition function
\smash{$\frac{Z_N({\bf T};\e)}{Z_N({\bf 0};\e)}$}
gives rise to~a~family of KP tau-functions in the big cell parameterized by the
t'Hooft coupling constant~${x=N\e}$ (see also~\cite{KKN, KMMM, Morozov, Shaw-Tu-Yen}).
Furthermore,
an explicit formula for the affine coordinates for this family of KP tau-function was derived in~\cite{Zhou-GUE}.
As an application,
the formulas for the corresponding $n$-point functions were obtained by applying the formula in~\cite{Zhou-Emergent}.

Now the GUE partition functions can be studied from two different perspectives: either as a tau-function of Dubrovin--Zhang type of the Toda lattice hierarchy,
or as a family of tau-functions of the KP hierarchy.
The belief that this is just a special case of general phenomenon leads us to Theorem \ref{thm:toda-kp}.

There are earlier results that also lead to Theorem~\ref{thm:toda-kp}.
It is well known that by introducing the t'Hooft coupling constant $x= N\e$,
\smash{$\log \frac{Z_N({\bf T};\e)}{Z_N({\bf 0};\e)}$} has an expansion as weighted sum
of ribbon graphs~\cite{DY, HZ, Zhou-GUE}.
In other words, the enumeration of ribbon graphs gives rise to a special family of tau-functions
of the KP hierarchy which become a special tau-function for the Toda lattice hierarchy
by multiplying by a suitable correction factor.
Recall that ribbon graphs can be regarded as {\em clean} dessins.
As a generalization,
the weighted sum of Grothendieck's dessins d'enfants has
been shown by Zograf~\cite{Zograf} (see also Kazarian and Zograf~\cite{KZ})
to be a tau-function of the KP hierarchy, that is
referred to as the dessin partition function.
Based on this fact, the second-named author had found the affine coordinates
of the tau-function of the KP hierarchy associated with the dessin counting~\cite{Zhou-dessins},
and so explicit formula for the $n$-point functions of the dessin partition function
can be written down using the general formula in~\cite{Zhou-Emergent}.
The dessin partition function is a family of KP tau-functions parameterized by two parameters.
It reduces to some other well-known tau-function by suitably specifying these parameters~\cite{Zhou-dessins}.
A proposal to study the dualities among different models based on the theory of KP hierarchy
was then proposed in~\cite{Zhou-dessins} and was further elaborated in~\cite{Zhou-dessins2}.

The study of KP tau-functions and the study of Toda tau-functions were merged
in our earlier work~\cite{YZ} in our study of the dessin tau-function and its role in unifying various theories.
Since counting dessins is similar to counting ribbon graphs and
GUE provides a matrix model for enumerating the ribbon graphs,
one naturally expects a matrix model for enumerating dessins.
The problem was addressed by Ambj{\o}rn and Chekhov~\cite{AC}, where
they proposed a matrix model in the class of generalized Kontsevich models
and an equivalent Hermitian 1-matrix model.
In~\cite{YZ}, we found that the Laguerre unitary ensemble (LUE)
gives a simpler and rigorous matrix model for the dessins
counting in the sense that the dessin partition function is equal to the {\em normalized}
LUE partition function.
Furthermore, by multiplying by a suitable correction factor,
the dessin partition function gives rise to a Dubrovin--Zhang type tau-function of the Toda lattice hierarchy.
It was proposed in~\cite{YZ} to study the
duality between dessin partition function with partition functions of other theories from the
viewpoint of Toda lattice hierarchy.
In this new approach, one can apply the theory of normal forms of integrable hierarchies
and the extended Toda hierarchy~\cite{CDZ} as developed by Dubrovin and Zhang~\cite{DZ-norm, DZ-toda}.
In fact,
the relevant Frobenius manifold is the Frobenius manifold associated with the Gromov--Witten invariants
of~$\bP^1$.
It has already been used in~\cite{Du09} (cf.\ \cite{DY, Y2}) to calculate the GUE correlators.

Based on some earlier results of Dijkgraaf and Witten~\cite{DW},
Eguchi and Yang~\cite{EY} proposed a~matrix model for the Gromov--Witten invariants of $\bP^1$ (see also~\cite{EHY}).
Besides the Toda lattice hierarchy,
the $\bP^1$-partition function satisfies an extra family of flows introduced in~\cite{EY, Getzler, Zhang} (see also~\cite{Carlet, CDZ, Du93, EY, Milanov}).
These flows are called {\it extended flows} by Getzler~\cite{Getzler} and Zhang~\cite{Zhang}.
They can be defined using the Lax pair formalism~\cite{CDZ} as follows:
\beq\label{ext-flows}
\e\frac{\p L}{\p b_i} = \frac{2}{i!}\bigl[\bigl(L^i(\log L - c_i)\bigr)_+,L\bigr], \qquad i\geq0,
\eeq
where $c_i:=\sum_{j=1}^i \frac1j$ are harmonic numbers and for the definition of $\log L$ see~\cite{CDZ, Milanov}.
The flows~\eqref{ext-flows} commute with the traditional flows~\eqref{TLH}
of the Toda lattice hierarchy, and they also pairwise commute.
All-together, \eqref{TLH}, \eqref{ext-flows} form the {\it extended Toda hierarchy} (ETH)~\cite{CDZ, Getzler, Zhang}.

Let $(v({\bf b}+x{\bf 1}, {\bf m}; \e), w({\bf b}+x{\bf 1}, {\bf m}; \e) )$ be an arbitrary solution
to the ETH, and $\tau({\bf b}+x{\bf 1},{\bf m}; \e)$ the tau-function~\cite{CDZ} of this solution
(although it is uniquely determined up to multiplying by
the exponential of an affine-linear function of ${\bf b}, {\bf m}$). 
Here ${\bf b}=(b_0,b_1,b_2,\dots)$ and ${\bf 1}=(1, 0, 0, \dots)$.
The following theorem
generalizes Theorem~\ref{thm:toda-kp}.
\begin{Theorem} \label{thm:toda-kp-ext}
The tau-function $\tau({\bf b}+x{\bf 1},{\bf m};\e)$ for the ETH
is a KP tau-function for any $x$ and ${\bf b}$, where ${\bf m}$ and~${\bf T}$ are related by~\eqref{tT30}.
\end{Theorem}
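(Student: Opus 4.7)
The plan is to follow the blueprint of Theorem~\ref{thm:toda-kp} outlined in Remark~\ref{remark12}: derive a matrix-resolvent formula for the connected $n$-point functions of the ETH tau-function and match it against the formula of~\cite{Zhou-Emergent} for a KP tau-function. With $x$ and $\bb$ held fixed, the assertion reduces to establishing the KP Hirota bilinear identities~\eqref{bilinearidH} for $\tau(\bb+x{\bf 1},\bm;\e)$ as a function of $\bT$ under the identification~\eqref{tT30}.

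First, I would generalize the matrix-resolvent (MR) method of~\cite{DY} by introducing, alongside the resolvents $R_i(\lambda)$ associated with the traditional Toda flows $\p/\p m_i$, a second family $S_i(\lambda)$ of matrix resolvents associated with the extended flows $\p/\p b_i$. Since the right-hand side of~\eqref{ext-flows} involves $\log L$, the polynomial construction used for the $R_i$ does not apply directly; a natural approach is to work with a formal spectral variable $\alpha$, build a resolvent for $L^\alpha$, and extract $S_i(\lambda)$ by differentiating at $\alpha=i$ and subtracting the harmonic constant~$c_i$. One must then verify the defining equations for the $S_i$'s, their compatibility with the $R_i$'s, and the correct $\log \lambda$ corrections in the relevant trace identities.

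Second, using both families of resolvents I would prove an $n$-point function formula for the ETH tau-function generalizing~\cite{Y}, expressing mixed connected derivatives $\p^n \log \tau/(\p m_{i_1}\cdots \p b_{j_\ell}\cdots)$ as symmetrized cyclic traces of products of $R$'s and $S$'s evaluated on Laurent expansions of the relevant spectral parameters at $\lambda=\infty$. Specializing to pure $\bm$-derivatives at fixed $\bb$ and $x$, and applying the change of variables~\eqref{tT30}, one then compares this formula term by term with the general formula of~\cite{Zhou-Emergent} for the $n$-point functions of a KP tau-function in terms of affine coordinates in the Sato Grassmannian. The matching identifies an explicit point of the Sato Grassmannian whose associated KP tau-function has the same connected $n$-point functions in $\bT$ as $\tau(\bb+x{\bf 1},\bm;\e)$; since a tau-function is determined by its connected $n$-point functions up to a factor depending only on $x$, $\bb$, $\e$, this forces~\eqref{bilinearidH}.

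The main obstacle I anticipate is Step~1: building the matrix resolvents $S_i(\lambda)$ and controlling the $\log L$ contributions within the MR framework, since extended flows have no direct polynomial analog and the $c_i$ subtractions must be tracked carefully through every identity. Once this auxiliary theory is in place, Step~2 should proceed in parallel with the pure-Toda derivation of~\cite{Y}, and the matching in Step~3 with the KP formula of~\cite{Zhou-Emergent} is essentially the same computation that underlies Theorem~\ref{thm:toda-kp}, now merely restricted to the $\bm$-subset of the ETH flows with the extended parameters $\bb$ playing the role of spectators.
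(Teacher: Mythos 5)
Your overall strategy---matrix-resolvent $n$-point formulas matched against the KP formula of~\cite{Zhou-Emergent}, then Corollary~\ref{conversthmA}---is indeed the paper's strategy, but you have misplaced the difficulty, and the step you flag as the main obstacle is both unnecessary and left unconstructed. The paper never builds a second family of resolvents for the extended flows. Since the KP times are identified with ${\bf m}$ alone via~\eqref{tT30} and ${\bf b}$, $x$ are spectators, only pure $m$-derivatives of $\log\tau$ are needed, and the single basic matrix resolvent $R(\lambda)$ already gives these at an arbitrary point $({\bf b}+x{\bf 1},{\bf m})$ by the locality of the construction: $\Omega_{i,j}$ and $S_i$ are differential-polynomial expressions in $(v,w)$, so formula~\eqref{multiderivativesresolvent} holds verbatim for any ETH solution. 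Your proposed resolvent for $L^{\alpha}$ with differentiation at $\alpha=i$ is a speculative detour whose verification you defer, and no mixed $\p_{b}\p_{m}$ correlators are ever required.

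Where the extended flows genuinely enter---and where your plan is silent---is in converting the trace-of-resolvents formula into the affine-coordinate form~\eqref{eqn:G-n}. This requires (i) factoring $R(\lambda)$ through a wave-function matrix, where the ETH wave functions~\eqref{psiphiform1},~\eqref{psiphiform2} carry exponential prefactors involving $\lambda^{i}(\log\lambda-c_i)$ coming from the $b$-flows; (ii) the paper's Lemma~\ref{keylemma}, which shows that ${\rm e}^{\Lambda^{-1}(\sigma)}d(\lambda)$ is annihilated by every $\p_{b_i}$ and $\p_{m_i}$, so that a normalized pair satisfying~\eqref{dformula1002} exists---this is precisely where the $(\log L)_-$ contributions and the $c_i$ subtractions must be tracked, via the identities for $r_j$, $n_j$ and $h_{1,p}$; and (iii) the verification that the $\log\lambda$, $\log\mu$ factors cancel in the kernel $B(\lambda,\mu;x,{\bf b},{\bf m};\e)$ so that it expands as $\frac1{\lambda-\mu}+\sum_{i,j}A_{i,j}\lambda^{-j-1}\mu^{-i-1}$ as in~\eqref{eqn:hat-A-26-extwitht}. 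Without (ii) and (iii) the matching with~\eqref{eqn:G-n} does not go through, so as written the proposal has a genuine gap: it proposes to build machinery the proof does not need, while assuming without argument the normalization and expansion properties that constitute the actual content of the generalization from Theorem~\ref{thm:toda-kp} to Theorem~\ref{thm:toda-kp-ext}.
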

The proof is in Section~\ref{sectionTodaext}.

\begin{Remark}
For our intended applications to Gromov--Witten theory, we are content with tau-functions as formal power series.
The results of Theorems~\ref{thm:toda-kp} and~\ref{thm:toda-kp-ext} raise the problem to characterize the KP tau-functions obtained from the tau-functions of Toda lattice
hierarchy or the extended Toda hierarchy. This problem is beyond the scope of this work and will be pursued in the future investigation.
In viewpoint of Sato Grassmannian we think that our formula~\eqref{dpsiabtoda} below sheds light on this discussion, which we also
mentioned briefly in Remark~\ref{remark12}.
We thank an anonymous referee for posing this problem to us.
\end{Remark}

It was conjectured in~\cite{Du93, Du96, EY, Getzler}, and was proved
by Dubrovin and Zhang~\cite{DZ-toda} that assuming the validity of the so-called Virasoro constraints (which is
proved, e.g., in~\cite{OP3})
the partition function \smash{$Z^{\bP^1}({\bf b}+x{\bf 1}, {\bf m}; \e)$} of
Gromov--Witten (GW) invariants of~\smash{$\bP^1$} (see Section~\ref{sectionP1} for the definition) is the tau-function
of a particular solution to the ETH. Independently, Okounkov and Pandharipande~\cite{OP2, OP1} proved that
$Z^{\bP^1}(x{\bf 1}, {\bf m}; \e)$ satisfies the bilinear equations for the~1-Toda hierarchy.
Based on the above-mentioned results and on Theorem~\ref{thm:toda-kp-ext}
we immediately obtain the following corollary.

\begin{Corollary} \label{ThmMainP1}
The partition function $Z^{\bP^1}({\bf b}+x{\bf 1} ,{\bf m}; \e)$ of GW invariants of~$\bP^1$ is a KP tau-function
for any $x$, $\e$ and~${\bf b}$,
where ${\bf m}$ is related to~${\bf T}$ by~\eqref{tT30}. In particular, $Z^{\bP^1}(x{\bf 1},{\bf m};\e)$ is a~KP tau-function for any $x$ and~$\e$.
\end{Corollary}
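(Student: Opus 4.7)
The plan is to combine Theorem \ref{thm:toda-kp-ext} with the known identification of $Z^{\bP^1}$ as a tau-function of the extended Toda hierarchy. Both ingredients are already collected in the text: by the theorem of Dubrovin and Zhang \cite{DZ-toda}, together with the Virasoro constraints for $\bP^1$ established by Okounkov and Pandharipande \cite{OP3}, the partition function $Z^{\bP^1}({\bf b}+x{\bf 1},{\bf m};\e)$ coincides with the tau-function of a specific solution $(v,w)$ to the ETH, with $x$ playing the role of the lattice/spatial variable and $({\bf b},{\bf m})$ providing the extended and traditional flow times; by Theorem \ref{thm:toda-kp-ext}, once the Toda times ${\bf m}$ are identified with the KP times ${\bf T}$ via \eqref{tT30}, any ETH tau-function satisfies the KP Hirota bilinear identity \eqref{bilinearidH} for each fixed $x$, $\e$ and ${\bf b}$.

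Thus the proof I would write is essentially a two-line deduction: invoke the Dubrovin--Zhang--Okounkov--Pandharipande identification to exhibit $Z^{\bP^1}({\bf b}+x{\bf 1},{\bf m};\e)$ as the tau-function of a solution to the ETH, then quote Theorem \ref{thm:toda-kp-ext} to conclude that, as a function of the corresponding ${\bf T}$ with the other variables fixed, it satisfies \eqref{bilinearidH}. The second assertion for $Z^{\bP^1}(x{\bf 1},{\bf m};\e)$ is obtained by specializing ${\bf b}={\bf 0}$.

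The one point that deserves a short justification, and which I regard as the only mild subtlety rather than a genuine obstacle, is that the ETH tau-function is only determined up to multiplication by the exponential of a function that is affine-linear in $({\bf b},{\bf m})$; one has to check that this ambiguity does not obstruct the KP property. Writing such a factor as $\mathrm{e}^{L({\bf T})}$ with $L$ affine-linear in the KP times (which it is, since \eqref{tT30} is linear), the contributions $L\bigl({\bf T}-[\lambda^{-1}]\bigr)+L\bigl({\bf T'}+[\lambda^{-1}]\bigr)$ that appear in \eqref{bilinearidH} are independent of $\lambda$, because the $[\lambda^{-1}]$-shifts cancel by affine-linearity. Hence $\mathrm{e}^{L({\bf T})+L({\bf T'})}$ factors out of the residue and the Hirota bilinear identity for $\tau$ is equivalent to that for $\mathrm{e}^{L({\bf T})}\tau$. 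Thus whichever normalization of the ETH tau-function is chosen, $Z^{\bP^1}$ itself satisfies \eqref{bilinearidH}, and the corollary follows.
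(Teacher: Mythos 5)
Your proposal is correct and follows essentially the same route as the paper: the corollary is obtained by combining the Dubrovin--Zhang identification of $Z^{\bP^1}({\bf b}+x{\bf 1},{\bf m};\e)$ as an ETH tau-function (assuming the Virasoro constraints proved by Okounkov--Pandharipande) with Theorem~\ref{thm:toda-kp-ext}, and then specializing ${\bf b}={\bf 0}$. Your additional check that the affine-linear ambiguity of the ETH tau-function is harmless is sound and is exactly the fact the paper records around~\eqref{linear1004}.
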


The stationary sector of the $\bP^1$-partition function is $Z^{\bP^1}(0{\bf 1} ,{\bf m};\e)$.
We compute in Section~\ref{sectionP1} its affine coordinates by two different methods: One by the results of~\cite{DYZ},
the other by the results of~\cite{OP1}.
The former gives us an explicit closed formula,
and the latter gives an expression in~terms of characters of the symmetric groups.
We then get a closed formula for some Plancherel averages in Corollary~\ref{cor:Plancherel}.

The rest of the paper is organized as follows.
In Section~\ref{sectionKP}, we recall the formula of $n$-point function for a KP tau-function.
In Section~\ref{sectionToda}, we show that an arbitrary tau-function for the Toda lattice hierarchy is a KP tau-function with the ground ring
being the ring of functions of the space variable of Toda. The generalization to ETH is given in Section~\ref{sectionTodaext}.
In Section~\ref{sectionP1}, we give an application of Corollary~\ref{ThmMainP1}.

\section[The formula for the n-point function for a KP tau-function]{The formula for the $\boldsymbol{n}$-point function for a KP tau-function}\label{sectionKP}

In this section, we first review the explicit formula (see Theorem~\ref{thm:n-point-KP} below) obtained by the second-named author
for the $n$-point function for an arbitrary KP tau-function in the big cell, and then consider the converse statement that
gives a criterion for a KP tau-function.

Before entering into the details let us introduce some {\it notations}.
Denote by $\mathcal{R}$ a suitable ground ring.
For a formal power series $F({\bf p})\in \mathcal{R}[[{\bf p}]]$,
with ${\bf p}=(p_1,p_2,\dots)$, define the $n$-point function associated with $F({\bf p})$ by
\[
G_n(\xi_1, \dots, \xi_n) = \sum_{k_1,\dots,k_n\ge1} \prod_{i=1}^n \frac{k_i}{\xi_i^{k_i+1}} \cdot
\frac{\pd^n F({\bf p})}{\pd p_{k_1} \cdots \pd p_{k_n}}\biggr|_{{\bf p} = {\bf 0}}, \qquad n\ge0.
\]
By a {\it partition} $\mu=(\mu_1,\mu_2,\dots)$, we mean a sequence of
weakly decreasing non-negative integers with $\mu_k=0$
 for sufficiently large $k$.
 The length $\ell(\mu)$ is the number of the non-zero parts of~$\mu$, the weight $|\mu|:=\mu_1+\mu_2+\cdots$,
 and the multiplicity of $i$ in~$\mu$ is denoted by $m_i(\mu)$.
 The set of all partitions will be denoted by~$\mathcal{P}$, and the set of partitions of weight~$d$ is denoted by $\mathcal{P}_d$.
 The
Schur polynomial $s_{\mu}({\bf p})$ associated to~$\mu\in\mathcal{P}$ is a polynomial in the variables
${\bf p} = (p_1,p_2,\dots)$,
defined by
\beq\label{defSchur1}
s_{\mu}({\bf p}) := \det_{1\leq i,j \leq \ell(\mu)}(h_{\mu_i-i+j}({\bf p})),
\eeq
where $h_j({\bf p})$ are polynomials defined by the generating function
\[
\sum_{j=0}^{\infty}h_j({\bf p})z^j := {\rm e}^{\sum_{k=1}^{\infty} \frac{p_k}{k} z^k}.
\]

One constructive way to describe tau-functions of the KP hierarchy is
to use Sato's Grassmannian. Let
\smash{$z^{1/2}\mathcal{R}\bigl[\bigl[z,z^{-1}\bigr]\bigr]$} be the space of formal series in $z$ of half integral powers.
An~element~$V$ in the big cell $Gr_0$ of Sato's Grassmannian is specified by a sequence of series
\begin{gather*} 
\Psi_j(z) = z^{j+1/2} + \sum_{i=0}^\infty A_{i,j} z^{-i-1/2}, \qquad j \geq 0,
\end{gather*}
where the coefficients $A_{i,j}$ are called the {\em affine coordinates} of~$V$. To any element~$V\in Gr_0$, there corresponds
a particular KP tau-function $Z_V$ defined by
\beq\label{fromaffinetotau}
Z_V:= \sum_{\lambda\in\mathcal{P}} c_\lambda s_\lambda({\bf p}),
\eeq
where $T_k=p_k/k$, and for a partition~$\lambda$ written in terms of the Frobenius notation~\cite[Section~I.1, p.~3]{Mac}
$\lambda=(m_1,\dots,m_k|n_1,\dots,n_k)$,
\[
c_\lambda := (-1)^{n_1+\dots+n_k} \det_{1\leq i,j\leq k} \bigl(A_{m_i, n_j}\bigr).
\]
In particular, for $\lambda=\bigl(i+1,1^j\bigr)$ we have $c_\lambda=(-1)^j A_{i,j}$.
So one can easily read off the affine coordinates from the expansion~\eqref{fromaffinetotau}.

The following theorem was obtained in~\cite{Zhou-Emergent}.

\begin{Theorem}[\cite{Zhou-Emergent}]\label{thm:n-point-KP}
The $n$-point function associated with $\log Z_V$ is given by the following formula:
For $n=1$,
\be \label{eqn:G-1}
G_1(\xi) = \sum_{i,j\geq 0} A_{i,j} \xi^{-i-j-2},
\ee
and for $n \ge2 $,
\be \label{eqn:G-n}
G_n(\xi_1, \ldots, \xi_n)
= (-1)^{n-1} \sum_{\text{$n$-cycles}}
\prod_{i=1}^n B\bigl(\xi_{\sigma(i)}, \xi_{\sigma(i+1)}\bigr)
- \frac{\delta_{n,2}}{(\xi_1-\xi_2)^2},
\ee
where $\sigma(n+1)$ is understood as $\sigma(1)$, and
\[
B(\xi_i, \xi_j) = \begin{cases}
\dfrac{1}{\xi_i-\xi_j} + A(\xi_i, \xi_j), & i \neq j, \\
A(\xi_i, \xi_i), & i =j, \\
\end{cases}
\]
and
\[
\begin{split}
A(\xi, \eta)
= & \sum_{i,j \geq 0} A_{i,j} \xi^{-j-1} \eta^{-i-1} .
\end{split}
\]
\end{Theorem}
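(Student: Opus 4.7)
The plan is to prove Theorem~\ref{thm:n-point-KP} via the fermionic realization of KP tau-functions on Sato's Grassmannian together with the boson--fermion correspondence. To an element $V\in Gr_0$ one associates a $GL(\infty)$-element $g_V$ on the charge-zero sector of the fermionic Fock space such that
\[
Z_V(\bT) = \corr{0 | {\rm e}^{H(\bT)}\, g_V | 0}, \qquad H(\bT) = \sum_{k\ge 1} T_k\, \alpha_k,
\]
with bosonic currents $\alpha_k = \sum_n :\!\psi_{n+k}\psi^*_n\!:$. In the affine chart around the vacuum one has $g_V = \exp\bigl(\sum_{i,j\ge 0} A_{i,j}\, \psi^*_{-j-1/2}\psi_{-i-1/2}\bigr)$, and a direct Wick expansion against Schur states recovers the expansion~\eqref{fromaffinetotau}. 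This setup makes the affine coordinates $A_{i,j}$ the natural parameters.

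Next I would compute the dressed two-point propagator. Introduce the generating fields $\psi(\xi)=\sum \psi_n \xi^{-n-1/2}$ and $\psi^*(\eta)=\sum \psi^*_n \eta^{-n-1/2}$ and define
\[
\widehat{B}(\xi,\eta) := \frac{\corr{0|\psi(\xi)\psi^*(\eta)\, g_V|0}}{\corr{0|g_V|0}}.
\]
Commuting $g_V$ through the fermion fields by its one-particle action and normal-ordering against the vacuum decomposes $\widehat{B}(\xi,\eta)$ as the free propagator $1/(\xi-\eta)$ plus a regular piece $A(\xi,\eta)=\sum_{i,j\ge 0}A_{i,j}\xi^{-j-1}\eta^{-i-1}$; this is exactly the kernel $B(\xi,\eta)$ appearing in the theorem, with the diagonal case $B(\xi,\xi)=A(\xi,\xi)$ arising as the normal-ordered limit.

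The third step identifies derivatives in $p_k$ with current insertions: under the boson--fermion correspondence, $G_n(\xi_1,\dots,\xi_n)$ equals the connected $n$-current correlator in the background of $g_V$, where each current is expanded as $\alpha_k = \oint \xi^k :\!\psi(\xi)\psi^*(\xi)\!:{\rm d}\xi/(2\pi {\rm i})$. Applying Wick's theorem to the full $n$-current correlator organizes the $2n$ fermions into pairings whose contribution factorizes into products of $\widehat{B}$'s; extracting the connected part forces the pairing pattern to be a single cyclic chain, producing the sum over $n$-cycles with sign $(-1)^{n-1}$. For $n=1$ this is a direct coefficient extraction yielding $G_1(\xi)=\sum_{i,j\ge 0}A_{i,j}\xi^{-i-j-2}$; for $n=2$ the free disconnected square $1/(\xi_1-\xi_2)^2$ cancels in passing from $Z_V$ to $\log Z_V$, accounting for the explicit subtraction $-\delta_{n,2}/(\xi_1-\xi_2)^2$.

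The main obstacle, and where the combinatorics must be handled carefully, is the reorganization of arbitrary fermionic Wick pairings into \emph{cyclic} chains of two-point propagators: one has to show that only $n$-cycles contribute to the connected part, that each such cycle appears with the correct sign after tracking all fermion reorderings, and that every non-cyclic pairing either cancels or belongs to a lower-order term in the cluster expansion. Once this combinatorial identity is established, the formulas~\eqref{eqn:G-1} and~\eqref{eqn:G-n} follow immediately from the explicit form of $\widehat{B}$ derived in the second step.
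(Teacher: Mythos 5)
Your proposal follows exactly the route of the cited source: the paper itself states Theorem~\ref{thm:n-point-KP} without proof, attributing it to \cite{Zhou-Emergent}, where it is proved precisely by the fermionic construction you outline (Bogoliubov-transformed vacuum $g_V|0\rangle$ parametrized by the affine coordinates, the dressed propagator $\widehat{B}(\xi,\eta)=\frac{1}{\xi-\eta}+A(\xi,\eta)$, current insertions for $\partial/\partial p_k$, and Wick's theorem reorganizing the connected part into cyclic chains with sign $(-1)^{n-1}$). The outline is correct and is essentially the same argument as the original.
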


Theorem~\ref{thm:n-point-KP} enables one to easily compute the $n$-point function once
we have found the affine coordinates $A_{i,j}$.
For example, there is only one $2$-cycle,
so the two-point function is given by
\ben
G_2(\xi_1, \xi_2)
& = & \frac{A(\xi_1, \xi_2) - A(\xi_2,\xi_1)}{\xi_1-\xi_2}
-A(\xi_1, \xi_2)\cdot A(\xi_2,\xi_1).
\een

Let us consider the converse of the above Theorem~\ref{thm:n-point-KP}. Recall the simple fact, which can be verified directly using~\eqref{bilinearidH}, that
multiplying a KP tau-function $\tau({\bf T}({\bf p}))$
by the exponential of an arbitrary affine linear function of~${\bf p}$
\beq\label{linear1004}
\tau({\bf T}({\bf p})) \cdot {\rm e}^{C_0+\sum_{k\ge1} C_k p_k}, \qquad C_k\in\mathcal{R},\quad k\ge0,
\eeq
produces again a KP tau-function. We recall here that $T_k=p_k/k$. The renormalized variables~$p_k$ are also sometimes called KP times.

\begin{Corollary}\label{conversthmA}
Let $F$ be a formal power series in~${\bf p}$ and $G_n(\xi_1,\dots,\xi_n)$ its $n$-point function.
If~there are $(A_{i,j})_{i,j\geq 0}$ such that for all $n\ge2$
$G_n$ are given by~\eqref{eqn:G-n},
then $Z= {\rm e}^F$ is a tau-function of the KP hierarchy.
\end{Corollary}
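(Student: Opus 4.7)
The plan is to realize $F$ as $\log Z_V$ for a suitable $V$ in the big cell of Sato's Grassmannian, up to an affine linear correction in the ${\bf p}$ variables, and then invoke the closure property noted around \eqref{linear1004}: multiplying a KP tau-function by the exponential of an affine linear function of ${\bf p}$ produces again a KP tau-function.

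First, I would take the given coefficients $(A_{i,j})_{i,j\ge0}$ and regard them as the affine coordinates of an element $V\in Gr_0$, forming the associated KP tau-function $Z_V$ via \eqref{fromaffinetotau}. This is well-posed as an element of $\mathcal{R}[[{\bf p}]]$ since at each fixed weight $d$ only finitely many partitions $\lambda\in\mathcal{P}_d$ contribute. By Theorem~\ref{thm:n-point-KP}, the $n$-point functions of $\log Z_V$ are, for every $n\ge2$, given by exactly the expressions \eqref{eqn:G-n} built from these same $A_{i,j}$. Combined with the hypothesis on $F$, this yields
\[
G_n^{F}(\xi_1,\dots,\xi_n) \;=\; G_n^{\log Z_V}(\xi_1,\dots,\xi_n) \qquad \text{for every } n\ge 2.
\]

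Second, from the very definition of the $n$-point function one reads off each mixed partial derivative $\pd^n F/\pd p_{k_1}\cdots \pd p_{k_n}|_{{\bf p}={\bf 0}}$ as a Laurent coefficient of $G_n^F$ in the variables $\xi_1^{-1},\dots,\xi_n^{-1}$ (the normalization factor $\prod_i k_i$ is nonzero). Hence $F$ and $\log Z_V$ agree on all mixed partial derivatives of every order $n\ge 2$ at ${\bf p}={\bf 0}$, which forces the formal series difference $H:=F-\log Z_V$ to be affine linear:
\[
H \;=\; C_0 + \sum_{k\ge 1} C_k \, p_k, \qquad C_k\in\mathcal{R}.
\]
Therefore ${\rm e}^F = {\rm e}^H \cdot Z_V$ is the product of the KP tau-function $Z_V$ with the exponential of an affine linear function of ${\bf p}$, and so ${\rm e}^F$ is itself a KP tau-function by \eqref{linear1004}.

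I do not foresee a serious obstacle: once Theorem~\ref{thm:n-point-KP} is in hand, the corollary is a formal inversion of the map $(A_{i,j})\mapsto Z_V\mapsto \bigl(G_n^{\log Z_V}\bigr)_{n\ge2}$. The two points that deserve attention are (i) the well-posedness of the Schur expansion \eqref{fromaffinetotau} over the ground ring~$\mathcal{R}$, which is immediate from the weight grading, and (ii) the observation that since nothing is assumed about $G_1^F$ the difference $F-\log Z_V$ can carry an affine linear term in ${\bf p}$; this ambiguity is precisely what \eqref{linear1004} is designed to absorb.
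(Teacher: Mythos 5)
Your proof is correct and follows essentially the same route as the paper: form $Z_V$ from the given $A_{i,j}$, match the $n$-point functions for $n\ge2$ via Theorem~\ref{thm:n-point-KP} to conclude $F-\log Z_V$ is affine linear, and absorb that difference using~\eqref{linear1004}. The paper's own proof is a one-line version of exactly this argument; your added details (well-posedness of the Schur expansion and the recovery of mixed partials from the $G_n$) merely make it explicit.
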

\begin{proof}
By Theorem~\ref{thm:n-point-KP},
the tau-function corresponds to the point in the big cell with $A_{i,j}$ being the affine coordinates
can only differ from $Z$ by multiplying by the exponential of some affine-linear function.
\end{proof}

\begin{Remark}
From the above proof, we see that Corollary~\ref{conversthmA} directly follows from Theorem~\ref{thm:n-point-KP}. In~\cite{ABDBKS},
Theorem~\ref{thm:n-point-KP} together with Corollary~\ref{conversthmA} is regarded as {\it Zhou's theorem}.
\end{Remark}

The next proposition describes how the affine coordinates change under~\eqref{linear1004}.

\begin{Proposition}\label{VVprimelinear}
Let $V$, $V'$ be two elements in the big cell having affine coordinates~$A_{i,j}$,~$A'_{i,j}$, respectively,
and let $Z_V$, $Z_{V'}$ be the KP tau-functions associated to~$V$, $V'$. If there exist $C_1,C_2,\dots$ such that
$
Z_{V'} = {\rm e}^{\sum_{k\ge1} C_k p_k} Z_V$,
then we have the following identity:
\begin{gather*}
\sum_{i,j \geq 0} A'_{i,j} \xi^{-j-1} \eta^{-i-1} + \frac1{\xi-\eta} = \biggl(\sum_{i,j \geq 0} A_{i,j} \xi^{-j-1} \eta^{-i-1} + \frac1{\xi-\eta}\biggr)
\frac{{\rm e}^{-\sum_{\ell\ge1} C_\ell \xi^{-\ell}}}{{\rm e}^{-\sum_{\ell\ge1} C_\ell \eta^{-\ell}}}.
\end{gather*}
\end{Proposition}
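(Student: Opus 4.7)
The plan is to reduce the proposition to an explicit closed-form representation of
\[
B(\xi,\eta) := \frac{1}{\xi-\eta} + \sum_{i,j\ge 0} A_{i,j}\,\xi^{-j-1}\eta^{-i-1}
\]
as a ratio of tau-function values, namely
\[
B(\xi,\eta) \;=\; \frac{1}{\xi-\eta}\cdot\frac{Z_V\bigl(-[\xi^{-1}]+[\eta^{-1}]\bigr)}{Z_V({\bf 0})}.
\]
Once this representation is in hand the proposition follows in one step. Indeed, applying the hypothesis $Z_{V'}({\bf T}) = {\rm e}^{\sum_k C_k p_k}\,Z_V({\bf T})$ at ${\bf T} = -[\xi^{-1}]+[\eta^{-1}]$ (where $p_k = \eta^{-k}-\xi^{-k}$) and at ${\bf T} = {\bf 0}$ (where $p_k = 0$), then dividing, produces the factor ${\rm e}^{\sum_k C_k(\eta^{-k}-\xi^{-k})}$, which equals ${\rm e}^{-\sum_k C_k\xi^{-k}}/{\rm e}^{-\sum_k C_k\eta^{-k}}$; this is exactly the right-hand side of the claimed identity applied to $B$.

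To establish the representation I would work directly from the Schur expansion~\eqref{fromaffinetotau}. The substitution $p_k = \eta^{-k}-\xi^{-k}$ specializes the Schur functions to the virtual alphabet $\{\eta^{-1}\}-\{\xi^{-1}\}$, on which $s_\lambda$ vanishes unless $\lambda$ is a hook (this is transparent from the skew-Cauchy expansion $s_\lambda(u/v) = \sum_\mu s_{\lambda/\mu}(u)(-1)^{|\mu|}s_{\mu'}(v)$: for a single positive letter $u$ the term $s_{\lambda/\mu}(u)$ requires $\lambda/\mu$ to be a horizontal strip, and for a single negative letter $v$ the term $s_{\mu'}(v)$ requires $\mu$ to be a column, so only $\mu = (1^m)$ with $\lambda$ a hook survives). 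For the hook $\lambda = (i+1,1^j)$ a direct computation (either by Jacobi--Trudi using $h_n = (\eta^{-1}-\xi^{-1})\eta^{-n+1}$ for $n\ge 1$, or by the two nonvanishing terms $m=j$ and $m=j+1$ of the skew-Cauchy sum) yields
\[
s_{(i+1,1^j)}\bigl(p_k = \eta^{-k}-\xi^{-k}\bigr) \;=\; (-1)^j\cdot\frac{\xi-\eta}{\xi\eta}\cdot\xi^{-j}\eta^{-i}.
\]
Combining with $c_{(i+1,1^j)} = (-1)^j A_{i,j}$ from~\eqref{fromaffinetotau}, the sum over all hooks contributes $(\xi-\eta)\cdot A(\xi,\eta)$; adding the empty-partition contribution and dividing by $Z_V({\bf 0}) = c_\emptyset = 1$ and by $\xi-\eta$ recovers exactly $B(\xi,\eta)$.

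The main obstacle is the hook-Schur identity above: it is elementary but demands careful sign bookkeeping and the observation that hooks are the only non-vanishing shapes. A perhaps cleaner route is to invoke the fermionic/Sato-theoretic interpretation $B(\xi,\eta) = \langle 0|\psi(\xi)\psi^*(\eta)|V\rangle/\langle 0|V\rangle$ implicit in~\cite{Zhou-Emergent}, under which the tau-function representation is a one-line consequence of the boson--fermion correspondence, and the transformation in the proposition reduces to the commutation of $\psi,\psi^*$ with the bosonic vertex operator realizing the shift $Z_V \mapsto {\rm e}^{\sum_k C_k p_k}\,Z_V$. Either way, once the representation is established the remainder of the proof is a purely mechanical substitution.
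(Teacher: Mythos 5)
Your proof is correct, and it takes a genuinely different (and more self-contained) route than the paper. The paper disposes of Proposition~\ref{VVprimelinear} with the single line ``by direct verifications using~\eqref{eqn:G-1}'', i.e., by comparing one-point functions; taken literally, \eqref{eqn:G-1} only controls the anti-diagonal sums $\sum_{i+j=n}A_{i,j}$, so the reader is left to supply the real content. You instead prove the principal-specialization identity
\[
\frac{1}{\xi-\eta}+\sum_{i,j\ge 0}A_{i,j}\,\xi^{-j-1}\eta^{-i-1}
=\frac{Z_V\bigl(-\bigl[\xi^{-1}\bigr]+\bigl[\eta^{-1}\bigr]\bigr)}{(\xi-\eta)\,Z_V({\bf 0})},
\]
which follows from the Schur expansion~\eqref{fromaffinetotau} because under $p_k=\eta^{-k}-\xi^{-k}$ only hook shapes survive and $s_{(i+1,1^j)}=(-1)^j\frac{\xi-\eta}{\xi\eta}\xi^{-j}\eta^{-i}$ (I checked your hook evaluation and signs; they are right, and note that the determinantal formula for $c_\lambda$ on non-hooks is never needed since those Schur functions vanish). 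Once $B(\xi,\eta)$ is exhibited as a ratio of tau-function values, the transformation law is the trivial observation that $\mathrm{e}^{\sum_k C_kp_k}$ evaluates to $\mathrm{e}^{-\sum_k C_k\xi^{-k}}/\mathrm{e}^{-\sum_k C_k\eta^{-k}}$ at that specialization and to $1$ at ${\bf p}={\bf 0}$. What your approach buys is a complete argument that also explains \emph{why} the multiplicative factor has exactly this form, and it gives the useful byproduct identity for $B$ itself (closely related to the paper's $D$-kernel in~\eqref{dpsiabtoda}); what the paper's (implicit) approach buys is brevity, staying entirely inside the $n$-point-function formalism of Section~\ref{sectionKP}. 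The only point worth flagging is your reliance on the big-cell normalization $Z_V({\bf 0})=c_\emptyset=1$, which indeed holds under the paper's conventions and is needed so that the hypothesis $Z_{V'}=\mathrm{e}^{\sum_{k\ge1}C_kp_k}Z_V$ (with no constant term) preserves the normalization at ${\bf T}={\bf 0}$.
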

\begin{proof}
By direct verifications using~\eqref{eqn:G-1}.
\end{proof}

\section{From the Toda lattice hierarchy to the KP hierarchy}\label{sectionToda}
In this section, we show that an arbitrary tau-function for the Toda lattice hierarchy is a KP tau-function.

Let $(v(x,{\bf m};\e), w(x,{\bf m};\e))$ be an arbitrary solution in $W[[{\bf m}]]^2$
to the Toda lattice hierarchy~\eqref{TLH}, and $\tau(x,{\bf m};\e)$ the tau-function~\cite{CDZ, DY, DZ-toda} of this solution.
Here, $W$ is a certain ring of functions of~$x$ closed under shifting~$x$
by $\pm \e$, and $\tau(x,{\bf m};\e)$ lives in \smash{$\widetilde W[[{\bf m}]]$} with \smash{$\widetilde W$} being some extension of~$W$.
The solution $(v(x,{\bf m};\e), w(x,{\bf m};\e))$ is in one-to-one correspondence
with its initial value
\[
v(x,{\bf 0};\e)=:f(x,\e), \qquad w(x,{\bf 0}; \e)=:g(x,\e).
\]
Denote by $L_{\rm ini}$ the initial Lax operator
\[
L_{\rm ini}=\Lambda+f(x,\e) + g(x,\e)\Lambda^{-1},
\]
and let
\[
s(x,\e) = - \bigl(1-\Lambda^{-1}\bigr)^{-1} (\log g(x,\e)),
\]
which lives in the extended ring~\smash{$\widetilde{W}$}.
Recall from~\cite{Y} that two elements
\[
 \psi_1(\lambda,x;\e) = \bigl(1+{\rm O}\bigl(\lambda^{-1}\bigr)\bigr)\lambda^{x/\e}, \qquad
\psi_2(\lambda,x;\e)=\bigl(1+{\rm O}\bigl(\lambda^{-1}\bigr)\bigr){\rm e}^{-s(x,\e)}\lambda^{-x/\e}
\]
are called forming {\it a pair of wave functions} of $L_{\rm ini}$ if
\begin{align*}
& L_{\rm ini} (\psi_1) = \lambda \psi_1, \qquad L_{\rm ini} (\psi_2) = \lambda \psi_2,\\
& d(\lambda,x;\e):= \psi_1 \, \Lambda^{-1}(\psi_2) - \psi_2 \, \Lambda^{-1}(\psi_1) = \lambda {\rm e}^{-s(x-\e,\e)}.
\end{align*}
The following formula is proved in~\cite{Y}:
\begin{align}
& \e^n \sum_{i_1,\dots,i_n\ge0}
\frac{(i_1+1)! \cdots (i_n+1)!} {\lambda_1^{i_1+2}\cdots \lambda_n^{i_n+2}} \,
\frac{\p^n \log \tau(x,{\bf m};\e)}{\p m_{i_1} \cdots \p m_{i_n}}\bigg|_{{\bf m}={\bf 0}}\nn\\
&\qquad{}= (-1)^{n-1} \frac{{\rm e}^{n s(x-\e,\e)}}{\prod_{j=1}^n \lambda_j} \sum_{\text{$n$-cycles}}
\prod_{j=1}^n D\bigl(\lambda_{\sigma(j)},\lambda_{\sigma(j+1)}; x;\e\bigr) - \frac{\delta_{n,2}}{(\lambda-\mu)^2},\label{Y421}
\end{align}
where $n\ge2$, and
\beq\label{dpsiabtoda}
D(\lambda,\mu; x;\e) := \frac{\psi_1(\lambda,x;\e) \psi_2(\mu,x-\e;\e) - \psi_1(\lambda,x-\e;\e) \psi_2(\mu,x;\e)}{\lambda-\mu}.
\eeq
\begin{Remark}
The function $D(\lambda,\mu; x;\e)$ looks similar to the
Christoffel--Darboux kernel in~matrix models~\cite{Deift, Mehta}.
We hope to study their relations in a future work.
\end{Remark}

It was observed in~\cite[(99)]{Y} that
\smash{$\frac{{\rm e}^{s(x-\e,\e)}}{\mu} D(\lambda,\mu; x;\e) \frac{\mu^{x/\e}}{\lambda^{x/\e}} - \frac1{\lambda-\mu}$}
is a power series of~$\lambda^{-1}$,~$\mu^{-1}$.
By a more careful analysis we can show that
\beq\label{eqn:hat-A-26}
\frac{{\rm e}^{s(x-\e,\e)}}{\mu} D(\lambda,\mu; x;\e) \frac{\mu^{\frac{x}\e}}{\lambda^{\frac{x}\e}} =
\frac1{\lambda-\mu} + \sum_{i,j\ge0} \frac{A_{i,j}(x,\e)}{\lambda^{j+1} \mu^{i+1}} =: B(\lambda,\mu;x;\e)
\eeq
for some coefficients $A_{i,j}(x,\e)$.
In the next section, we will give a complete proof of a generalized version of~\eqref{eqn:hat-A-26}.

Using~\eqref{eqn:hat-A-26} and~\eqref{Y421} we find
\begin{align}
&\e^n \sum_{i_1,\dots,i_n\ge0}
\frac{(i_1+1)! \cdots (i_n+1)!} {\lambda_1^{i_1+2}\cdots \lambda_n^{i_n+2}} \,
\frac{\p^n \log \tau(x,{\bf m};\e)}{\p m_{i_1} \cdots \p m_{i_n}}\bigg|_{{\bf m}={\bf 0}}\nn \\
&\qquad= (-1)^{n-1} \sum_{\text{$n$-cycles}}
\prod_{j=1}^n B\bigl(\lambda_{\sigma(j)},\lambda_{\sigma(j+1)};\e\bigr) - \frac{\delta_{n,2}}{(\lambda-\mu)^2}, \qquad n\ge2.\label{Y926}
\end{align}
(See also~\cite[Corollary~1]{Y}.)

We are ready to prove Theorem~\ref{thm:toda-kp}.

\begin{proof}[Proof of Theorem~\ref{thm:toda-kp}]
By using \eqref{eqn:hat-A-26}, \eqref{Y926} and~\eqref{tT30}, we know that the $n$-point function for~${n\ge2}$
associated to $\log\tau(x,{\bf m};\e)$ has the form~\eqref{eqn:G-n}.
The theorem is then proved by applying Corollary~\ref{conversthmA}.
\end{proof}

If we write
\[
\tau(x,{\bf m};\e) = \tau_{\rm corr}(x,\e) \tau_1 (x,{\bf m};\e), \qquad \tau_1(x, {\bf m}={\bf 0};\e)\equiv1,
\]
then by Theorem~\ref{thm:toda-kp} we know that $\tau_1(x, {\bf m}; \e)$ is a KP tau-function in the big cell.
The factor~$\tau_{\rm corr}(x, \e)$ can~\cite{CDZ, Y} be determined by
\beq\label{correctionfactor}
\frac{\tau_{\rm corr}(x+\e,\e)\tau_{\rm corr}(x-\e,\e)}{\tau_{\rm corr}(x,\e)^2} = g(x,\e) = {\rm e}^{s(x-\e,\e)-s(x,\e)}.
\eeq
This factor can be identified with the {\it correction factor} in~\cite{DY, Y, YZ}.

Let $\tilde{\tau}_1(x, {\bf m};\e)$ be the KP tau-function
associated to the point in Sato's Grassmannian having the affine coordinates
$A_{i,j}(x,\e)$, with $A_{i,j}(x,\e)$ given in~\eqref{eqn:hat-A-26}. Then there exist
$a_i(x,\e)$, $i\ge0$, such that
\[
\tau_1(x, {\bf m};\e) = {\rm e}^{\sum_{i\ge0} a_i(x,\e) m_i} \tilde{\tau}_1(x, {\bf m};\e).
\]
By further applying Proposition~\ref{VVprimelinear}, we get the affine coordinates for $\tau_1(x, {\bf m};\e)$.

\section{Extended Toda flows in the KP hierarchy}\label{sectionTodaext}
The goal of this section is to prove Theorem~\ref{thm:toda-kp-ext}.

Before entering into the main construction for this section, we briefly recall here the definition of $\log L$.
It is shown in~\cite{CDZ} that there exist dressing operators
\begin{align*}
P= \sum_{k\ge0} P_k \Lambda^{-k}, \qquad P_0= 1,\qquad
Q= \sum_{k\ge0} Q_k \Lambda^k, 
\end{align*}
such that
\[
L=P\circ\Lambda\circ P^{-1}=Q\circ\Lambda^{-1}\circ Q^{-1}.
\]
Here, $P_k$, $Q_k$ belong to a certain extension of the differential polynomial ring.
The logarithm of the Lax operator is then defined by~\cite{CDZ}
\[
\log L= \frac12 P\circ \e\p_x \circ P^{-1} - \frac12 Q\circ \e\p_x \circ Q^{-1}.
\]

As in the introduction, for an arbitrary solution $(v({\bf b}+x{\bf 1}, {\bf m};\e), w({\bf b}+x{\bf 1}, {\bf m};\e))$ to the ETH,
let $\tau({\bf b}+x{\bf 1}, {\bf m};\e)$ be the tau-function (again in the sense of~\cite{CDZ, DZ-toda})
 of the solution. Let us also denote
 \beq\label{defsigma}
 \sigma(x, {\bf b}, {\bf m};\e) = \log\frac{\tau({\bf b}+x{\bf 1}, {\bf m};\e)}{\tau({\bf b}+(x+\e){\bf 1}, {\bf m};\e)}.
 \eeq

\subsection{The matrix-resolvent method}
The matrix-resolvent method for studying tau-functions for the Toda lattice hierarchy was developed in~\cite{DY} (see also~\cite{Y}).
By the locality nature of this method, the same formulation as~for the Toda lattice hierarchy
applies to the ETH.

Denote by
\[
\mathcal{A}=\ZZ[v_0, w_0, v_{\pm1}, w_{\pm 1}, v_{\pm 2}, w_{\pm 2}, \dots]
\]
the polynomial ring.
Recall that the {\it basic matrix resolvent} $R(\lambda)$ is defined~\cite{DY} as
the unique element in ${\rm Mat}\bigl(2,\mathcal{A}\bigl[\bigl[\lambda^{-1}\bigr]\bigr]\bigr)$ satisfying
\begin{align*}
& \Lambda(R(\lambda)) U(\lambda) - U(\lambda) R(\lambda) =0,\\
& \tr \, R(\lambda) =1, \qquad \det R(\lambda)=0,\qquad
 R(\lambda) - \begin{pmatrix} 1 & 0 \\ 0 & 0 \\ \end{pmatrix} \in {\rm Mat}\bigl(2,\mathcal{A}\bigl[\bigl[\lambda^{-1}\bigr]\bigr]\lambda^{-1}\bigr), 
\end{align*}
where
\[
U(\lambda) := \begin{pmatrix} v_0-\lambda & w_0 \\ -1 & 0 \\ \end{pmatrix}.
\]
Define \smash{$S_i=\frac1{(i+1)!}{\rm Coef} \bigl(\Lambda(R(\lambda)_{21}), \lambda^{-i-2}\bigr)\in\mathcal{A}$}, $i\geq0$, and
a sequence of elements $\Omega_{i,j} \in \mathcal{A}$ by
\[
\e^2\sum_{i,j\geq0} \frac{(i+1)!(j+1)!}{\lambda^{i+2} \mu^{j+2}}\Omega_{i,j} = \frac{\tr \, R(\lambda)R(\mu)}{(\lambda-\mu)^2} - \frac{1}{(\lambda-\mu)^2}.
\]

According to~\cite{DY}, the above-defined $(\Omega_{i,j}, S_i)$ gives rise to part of
the canonical tau-structure for the ETH in~\cite{CDZ, DZ-toda}.
Indeed, it is shown in~\cite{DY} that $(\Omega_{i,j}, S_i)$ is associated to the
tau-symmetric hamiltonian densities $h_{\alpha,p}$, $\alpha=1,2$, $p\ge-1$,~\cite{CDZ, DZ-toda} for the Toda lattice hierarchy.
More precisely, $S_i=h_{2,i-1}$ and the following identities hold
for an arbitrary solution $(v({\bf b}+x{\bf 1},{\bf m};\e), w({\bf b}+x{\bf 1},{\bf m};\e))$ to the ETH:
\begin{align}
& \e^2\frac{\p^2 \log\tau({\bf b}+x{\bf 1},{\bf m};\e)}{\p m_i \p m_j}=\Omega_{i,j}({\bf b}+x{\bf 1},{\bf m};\e),\nn \\
& \e \frac{\p}{\p m_i} \biggl(\log \frac{\tau({\bf b}+(x+\e){\bf 1},{\bf m};\e)}{\tau({\bf b}+x{\bf 1},{\bf m};\e)}\biggr) =
S_i({\bf b}+x{\bf 1},{\bf m};\e), \label{deftau2} \\
& \frac{\tau({\bf b}+(x+\e){\bf 1},{\bf m};\e)\tau({\bf b}+(x-\e){\bf 1},{\bf m};\e)}{\tau({\bf b}+x{\bf 1},{\bf m};\e)^2} = w({\bf b}+x{\bf 1},{\bf m};\e). \label{deftau3}
\end{align}
Here, $\Omega_{i,j}({\bf b}+x{\bf 1},{\bf m};\e)$ and
$S_i({\bf b}+x{\bf 1},{\bf m};\e)$ are obtained by replacing $v_k$, $w_k$ by $v({\bf b}+x{\bf 1}+k\e,{\bf m};\e)$ and
$w({\bf b}+x{\bf 1}+k\e,{\bf m};\e)$, respectively.
According to~\cite{DY}, for any $n\ge2$,
\begin{align}
&\e^n \sum_{i_1,\dots,i_n\ge0}
\prod_{j=1}^n \frac{(i_j+1)!}{\lambda_j^{i_j+2}} \frac{\p^n \log \tau({\bf b}+x{\bf 1},{\bf m}; \e)}{\p m_{i_1} \cdots \p m_{i_n}}\nn\\
&\qquad{}= - \sum_{\text{$n$-cycles}}
\frac{\tr \, \prod_{j=1}^n R(\lambda_{\sigma(j)})}{\prod_{j=1}^n \bigl(\lambda_{\sigma(j)}-\lambda_{\sigma(j+1)}\bigr)} - \frac{\delta_{n,2}}{(\lambda-\mu)^2}.\label{multiderivativesresolvent}
\end{align}

Note that by~\eqref{defsigma} and \eqref{deftau2}, \eqref{deftau3}, we have
\begin{align}
& - \e\frac{\p \sigma(x, {\bf b},{\bf m};\e)}{\p m_i} = S_i({\bf b}+x{\bf 1},{\bf m};\e), \label{property1sigma} \\
& {\rm e}^{\sigma(x-\e, {\bf b},{\bf m};\e)-\sigma(x, {\bf b},{\bf m};\e)} = w({\bf b}+x{\bf 1},{\bf m};\e). \label{propertysigma}
\end{align}
It is also clear from~\eqref{correctionfactor} that we can choose
$\sigma(x, b_0=0, b_1=0, \dots, {\bf m}={\bf 0};\e)=s(x,\e)$.

\subsection{Wave functions for ETH and the KP hierarchy}
It is shown in~\cite{Carlet} (see also~\cite{CL, Milanov}) that there exist
dressing operators $P$, $Q$ of the forms
\begin{align}
&P= \sum_{k\ge0} P_k(x, {\bf b}, {\bf m};\e) \Lambda^{-k}, \qquad P_0(x, {\bf b}, {\bf m};\e)\equiv 1, \label{CoePk1005}\\
&Q= \sum_{k\ge0} Q_k(x, {\bf b}, {\bf m};\e) \Lambda^k, \label{CoeQk1005}
\end{align}
such that
\begin{align*}
& L=P\circ\Lambda\circ P^{-1}=Q\circ\Lambda^{-1}\circ Q^{-1}, \\
& \e\frac{\p P}{\p m_i} = - \frac1{(i+1)!} \bigl(L^{i+1}\bigr)_- \circ P, \\
& \e\frac{\p P}{\p b_i} = - \frac{2}{i!}\bigl(L^{i}(\log L-c_i)\bigr)_- \circ P,\\
& \e\frac{\p Q}{\p m_i} = \frac1{(i+1)!}\bigl(L^{i+1}\bigr)_+ \circ Q, \\
& \e\frac{\p Q}{\p b_i} = \frac{2}{i!}\bigl(L^{i}(\log L-c_i)\bigr)_+ \circ Q.
\end{align*}
Define
\begin{align*}
& \psi_1(\lambda; x, {\bf b}, {\bf m};\e) :=
P\Bigl(\lambda^{\frac{x}\e} {\rm e}^{\sum_{i\ge0} \frac{2}{i!} \frac{b_i}\e \lambda^i (\log \lambda-c_i)+\sum_{i\ge0} \frac{m_i}{(i+1)!\e} \lambda^{i+1}}\Bigr) , \\
& \psi_2(\lambda; x, {\bf b}, {\bf m};\e) := Q\Bigl(\lambda^{-\frac{x}\e}{\rm e}^{-\sum_{i\ge0} \frac{2}{i!} \frac{b_i}\e \lambda^i (\log \lambda-c_i)-\sum_{i\ge0} \frac{m_i}{(i+1)!\e} \lambda^{i+1}}\Bigr).
\end{align*}
Then $\psi_a=\psi_a(\lambda; x, {\bf b}, {\bf m};\e)$, $a=1,2$, satisfy the following equations:
\begin{align}
& L\psi_a = \lambda \psi_a, \qquad a=1,2, \label{Laxsp1005}\\
& \e \frac{\p \psi_1}{\p m_i} = \frac1{(i+1)!} \bigl(L^{i+1}\bigr)_+ \psi_1,\nn \\
& \e \frac{\p \psi_1}{\p b_i} = \frac{2}{i!} L^i(\log L - c_i)_+ \psi_1, \label{Laxa1bi} \\
& \e \frac{\p \psi_2}{\p m_i} = -\frac1{(i+1)!} \bigl(L^{i+1}\bigr)_- \psi_2,\nn \\
& \e \frac{\p \psi_2}{\p b_i} = -\frac{2}{i!} L^i(\log L - c_i)_- \psi_2. \label{Laxa2bi}
\end{align}
Moreover, $\psi_1$, $\psi_2$ have the form
\begin{align}
&\psi_1 = \phi_1(\lambda; x, {\bf b}, {\bf m};\e) \lambda^{\frac{x}\e}
 {\rm e}^{\sum_{i\ge0} \frac{2}{i!} \frac{b_i}{\e} \lambda^i (\log \lambda-c_i)+\sum_{i\ge0} \frac{m_i}{(i+1)! \e} \lambda^{i+1}} , \label{psiphiform1} \\
&\psi_2 = \phi_2(\lambda; x, {\bf b}, {\bf m};\e) {\rm e}^{-\sigma(x,{\bf b},{\bf m};\e)} \lambda^{-\frac{x}\e}
 {\rm e}^{-\sum_{i\ge0} \frac{2}{i!} \frac{b_i}{\e} \lambda^i (\log \lambda-c_i)-\sum_{i\ge0} \frac{m_i}{(i+1)! \e} \lambda^{i+1}} , \label{psiphiform2}
\end{align}
where
\begin{align}
& \phi_1(\lambda; x, {\bf b}, {\bf m};\e) =\sum_{k\ge0} P_k(x, {\bf b}, {\bf m};\e) \lambda^{-k}, \label{phiexpansion1}\\
& \phi_2(\lambda; x, {\bf b}, {\bf m};\e) =\sum_{k\ge0} B_k(x, {\bf b}, {\bf m};\e) \lambda^{-k}. \label{phiexpansion2}
\end{align}
We recall that $P_k(x, {\bf b}, {\bf m};\e)$ are given in~\eqref{CoePk1005},
and we also note using \eqref{CoeQk1005} that $B_k(x, {\bf b}, {\bf m};\e)=Q_k(x, {\bf b},{\bf m};\e) {\rm e}^{\sigma(x, {\bf b},{\bf m};\e)}$, $k\geq 0$,
and $B_0(x, {\bf b}, {\bf m};\e)\equiv 1$.
We call $\psi_1$ {\it the wave function of type~A} and $\psi_2$ {\it the wave function of type~B},
{\it associated to the solution $(v(x, {\bf b}, {\bf m};\e), w(x, {\bf b}, {\bf m};\e))$}.

Denote
\begin{align*}
&d(\lambda; x, {\bf b}, {\bf m};\e) \\
&\qquad{}:= \psi_1(\lambda; x, {\bf b}, {\bf m};\e) \psi_2(\lambda; x-\e, {\bf b}, {\bf m};\e)
- \psi_1(\lambda; x-\e, {\bf b}, {\bf m};\e) \psi_2(\lambda; x, {\bf b}, {\bf m};\e).
\end{align*}
The following lemma is important.
\begin{Lemma}\label{keylemma}
For any $i\ge0$, we have the identities
\begin{align}
& \frac{\p \bigl({\rm e}^{\Lambda^{-1}(\sigma(x,{\bf b},{\bf m};\e))}d(\lambda; x, {\bf b}, {\bf m};\e)\bigr)}{\p b_i} =0 , \label{id1} \\
& \frac{\p \bigl({\rm e}^{\Lambda^{-1}(\sigma(x,{\bf b},{\bf m};\e))}d(\lambda; x, {\bf b}, {\bf m};\e)\bigr)}{\p m_i} =0. \label{id2}
\end{align}
\end{Lemma}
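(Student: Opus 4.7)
My plan is to verify $\p F/\p m_i = 0 = \p F/\p b_i$ by direct computation, where $F(x) := {\rm e}^{\sigma(x-\e,{\bf b},{\bf m};\e)}\, d(\lambda;x,{\bf b},{\bf m};\e)$. First, using the factorized forms \eqref{psiphiform1}, \eqref{psiphiform2} and the fact that the exponential prefactor in each $\psi_a$ is independent of $x$, I compute
\[
d = \lambda\,{\rm e}^{-\sigma(x-\e)}\phi_1(x)\phi_2(x-\e) - \lambda^{-1}\,{\rm e}^{-\sigma(x)}\phi_1(x-\e)\phi_2(x),
\]
so by \eqref{propertysigma}, $F = \lambda\,\phi_1(x)\phi_2(x-\e) - \lambda^{-1}w(x)\,\phi_1(x-\e)\phi_2(x)$. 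This will be the object to differentiate.

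For the $m_i$-flow, I differentiate $d$ using the evolutionary equations $\e\,\p\psi_1/\p m_i = (L^{i+1})_+\psi_1/(i+1)!$ and $\e\,\p\psi_2/\p m_i = -(L^{i+1})_-\psi_2/(i+1)!$ (stated just before \eqref{Laxa1bi}, \eqref{Laxa2bi}), and replace $(L^{i+1})_-\psi_a = \lambda^{i+1}\psi_a - (L^{i+1})_+\psi_a$ using $L^{i+1}\psi_a = \lambda^{i+1}\psi_a$, which is a consequence of \eqref{Laxsp1005}. The $\lambda^{i+1}$ contributions combine into $-\lambda^{i+1}/(i+1)!\cdot d$, and the four remaining $[(L^{i+1})_+\psi_a](y)$-terms with $y\in\{x,x-\e\}$ are handled by iteratively reducing $\psi_a(y+k\e)$ to $\alpha^{(k)}(\lambda,y)\psi_a(y)+\beta^{(k)}(\lambda,y)\psi_a(y-\e)$ via the three-term recurrence $\psi_a(y+\e)=(\lambda-v(y))\psi_a(y)-w(y)\psi_a(y-\e)$. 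Because the reduction coefficients $\alpha^{(k)}, \beta^{(k)}$ depend only on $v,w,\lambda$ (not on $a\in\{1,2\}$), the resulting $\psi_1(y)\psi_2(y')$ bilinears collapse to multiples of $d$, either directly by Wronskian cancellation or via one further recurrence step applied to products like $\psi_1(x)\psi_2(x-2\e)-\psi_1(x-2\e)\psi_2(x) = ((\lambda-v(x-\e))/w(x-\e))\cdot d$. The upshot is $\e\,\p d/\p m_i = C_i(\lambda,x,{\bf b},{\bf m};\e)\cdot d$ for an explicit $C_i$, and by \eqref{property1sigma} applied at $x-\e$, the vanishing of $\p F/\p m_i$ reduces to $C_i = S_i(x-\e)$, which I verify by matching $\lambda^{-1}$-asymptotics against the defining equation $\Lambda(R(\lambda))U(\lambda)=U(\lambda)R(\lambda)$ of the basic matrix resolvent.

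The $b_i$-flow argument is structurally identical, with $L^{i+1}/(i+1)!$ replaced by $(2/i!)L^i(\log L - c_i)$. The required eigenvalue identity $L^i(\log L - c_i)\psi_a = \lambda^i(\log\lambda - c_i)\psi_a$ for both $a=1,2$ follows from the dressings: since $P^{-1}\psi_1 = \lambda^{x/\e}E$ with $E$ being $x$-independent, one has $\e\p_x(P^{-1}\psi_1) = \log\lambda\cdot P^{-1}\psi_1$; while $P^{-1}\psi_2$, forced by $L\psi_2=\lambda\psi_2$ to be an eigenfunction of $\Lambda$ with eigenvalue $\lambda$, must be proportional to $\lambda^{x/\e}$ and hence also an $\e\p_x$-eigenfunction with eigenvalue $\log\lambda$; a symmetric argument on the $Q$-side handles $Q^{-1}\psi_a$. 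The same reduction-and-collapse procedure then produces $\e\,\p d/\p b_i = \tilde C_i\cdot d$, matched against the $b$-analog of \eqref{property1sigma}.

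I expect the main obstacle to be the identification step: bridging the wave-function bilinear $C_i$ (and its $b$-counterpart) with the matrix-resolvent-defined $S_i$ (and its extended analog). A conceptually cleaner alternative, which I would pursue in parallel, is to establish the stronger identity $F(x)\equiv\lambda$ directly by combining the $x$-shift relation $d(x+\e)=w(x)d(x)$ (which gives $F(x+\e)=F(x)$) with the leading asymptotic $F = \lambda + O(1)$ and a dressing-operator analysis of the subleading coefficients; verifying $F=\lambda$ as an exact identity would immediately trivialize both \eqref{id1} and \eqref{id2}.
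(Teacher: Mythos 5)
Your reduction for the $m_i$-flows has the same architecture as the paper's proof: split $(L^{i+1})_-=L^{i+1}-(L^{i+1})_+$, use $L\psi_a=\lambda\psi_a$, and collapse the antisymmetrized bilinears $\psi_1(y)\psi_2(y')-\psi_1(y')\psi_2(y)$ into multiples of $d$ via the three-term recurrence. But the step you yourself flag as the ``main obstacle'' --- identifying the resulting coefficient $C_i(\lambda)$ with the $\lambda$-independent quantity $S_i(x-\e)$ --- is where the actual content lies, and ``matching $\lambda^{-1}$-asymptotics against $\Lambda(R)U=UR$'' does not by itself address it: $C_i$ a priori contains $-\lambda^{i+1}/(i+1)!$ and other positive powers of $\lambda$, so one must prove an exact cancellation of all of them, not an asymptotic statement. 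The paper does this by expanding $-(L^{i+1})_-=\sum_{j=0}^{i}\bigl(N_j-R_jw\Lambda^{-1}\bigr)\circ L^{i-j}$ and observing that each coefficient $N_j+\Lambda^{-1}(N_j)+\Lambda^{-1}(vR_j-R_{j+1})$ of $\lambda^{i-j}$ vanishes because $L^{j+1}=L\circ L^j=L^j\circ L$, while the constant term matches $S_i$ through $S_i=h_{2,i-1}=R_{i+1}/(i+1)!$. (Alternatively, once you know $C_i$ is a polynomial in $\lambda$ and $F/\lambda=1+{\rm O}\bigl(\lambda^{-1}\bigr)$, the identity $C_i-S_i(x-\e)=\e\,\p_{m_i}\log(F/\lambda)={\rm O}\bigl(\lambda^{-1}\bigr)$ forces $C_i=S_i(x-\e)$; this degree count would close your gap, but you did not make it.)

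The $b_i$-case contains a genuine error. You justify $L^i(\log L-c_i)\psi_2=\lambda^i(\log\lambda-c_i)\psi_2$ by arguing that $P^{-1}\psi_2$ is an eigenfunction of $\Lambda$ with eigenvalue $\lambda$ and hence proportional to $\lambda^{x/\e}$. But $P^{-1}=\sum_{k\ge0}\tilde P_k\Lambda^{-k}$ applied to $\psi_2\sim\phi_2\,{\rm e}^{-\sigma}\lambda^{-x/\e}{\rm e}^{-\Theta}$ produces unboundedly positive powers of $\lambda$ and is not a well-defined formal series, so this object cannot be manipulated in this way (this is precisely why $\log L$ is defined as an average of two dressings, each of which is only one-sidedly computable on each wave function). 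The paper avoids the issue entirely: it first proves $\Lambda\bigl({\rm e}^{\Lambda^{-1}(\sigma)}d\bigr)={\rm e}^{\Lambda^{-1}(\sigma)}d$, deduces $\p_x\bigl({\rm e}^{\Lambda^{-1}(\sigma)}d\bigr)=0$, and then expands $-2\bigl(L^i(\log L-c_i)\bigr)_-$ so that the only logarithmic contribution is through $(\log L)_-$, which is controlled by the $b_0$-flow, i.e., by $\p_x$. Finally, your proposed ``cleaner alternative'' $F\equiv\lambda$ is not available as a shortcut: for a general choice of dressing operators one only has $F=\lambda\sum_{k\ge0}d_k\lambda^{-k}$ with $d_0=1$, and the statement that the $d_k$ are independent of $\bb$, $\bm$ is exactly the content of Lemma~\ref{keylemma}; establishing it by ``a dressing-operator analysis of the subleading coefficients'' would amount to reproving the lemma, so this route is circular.
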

\begin{proof}
We first introduce some notations,
\begin{align*}
& R_j := {\rm res} L^j, \qquad N_j={\rm Coef}\bigl(L^j, \Lambda^{-1}\bigr), \\
& r_j:= {\rm res} \log L \circ L^j, \qquad n_j= {\rm Coef}\bigl(\log L \circ L^j, \Lambda^{-1}\bigr).
\end{align*}
({\it Warning: avoid from confusing the notation $R_j$ with the notation for the matrix resolvent.})
For example~ (see \cite{Carlet}),
\begin{align*}
& R_0=1, \qquad R_1=v, \qquad N_0=0, \qquad N_1=w, \\
& r_0= \frac12 \Lambda \circ \frac{\e\p}{\Lambda-1} (v), \qquad n_0 = \frac12 \frac{\e\p}{\Lambda-1} (\log w) .
\end{align*}
Here we omitted the arguments in $R_j$, $r_j$, $N_j$, $n_j$, $v$, $w$. Below, when
no ambiguity will occur, we often do this type of omission.

As it was given in~\cite{CDZ}, the tau-symmetric hamiltonian densities $h_{\alpha, p}$ for the ETH
are related to $R_j$, $r_j$ by
\begin{align}
& h_{2, p} = \frac1{(p+2)!} R_{p+2}, \label{CDZtausymm2}\\
& h_{1,p} = \frac2{(p+1)!} r_{p+1} - \frac{2c_{p+1}}{(p+1)!} R_{p+1}, \qquad p\ge-1.\nn
\end{align}
We also recall from~\cite{CDZ} the following formula:
\[
h_{\alpha,i-1} = \e(\Lambda-1) \frac{\p \log \tau}{\p t^{\alpha,i}} = - \e\frac{\p \sigma}{\p t^{\alpha,i}}, \qquad i\ge0.
\]

Let us now prove~\eqref{id2}. (Although \eqref{id2} was already proved in~\cite{Y},
our proof here will be slightly different from~\cite{Y} and can be generalized to proving~\eqref{id1}.)
We have
\beq\label{derivativemi1005}
{\rm e}^{-\Lambda^{-1}(\sigma)}\frac{\p \bigl({\rm e}^{\Lambda^{-1}(\sigma)}d\bigr)}{\p m_i}
= \frac{\p \Lambda^{-1}(\sigma)}{\p m_i} d + \frac{\p d}{\p m_i}, \qquad i\ge0.
\eeq
Denote
\[
B_i:= -\bigl(L^{i+1}\bigr)_-, \qquad i\ge-1.
\]
Then for all $i\ge0$, we have
\begin{align*}
B_i & = -\bigl(L^i\circ L\bigr)_- = - \bigl(\bigl(L^i\bigr)_+ + \bigl(L^i\bigr)_-\bigr) \circ L_-
 = - (R_i L)_- - \bigl(\bigl(L^i\bigr)_- \circ L\bigr)_- \nn\\
& = B_{i-1} \circ L + N_i - R_i w \Lambda^{-1} = \cdots
 = \sum_{j=0}^i \bigl(N_j-R_j w\Lambda^{-1}\bigr) \circ L^{i-j}. \nn
\end{align*}
(The resulting equality is also valid for $i=-1$ as $B_{-1}=0$.)
Thus,
\beq\label{psi2mi1005}
(i+1)! \e \frac{\p \psi_2}{\p m_i} = B_i (\psi_2) = \sum_{j=0}^i \lambda^{i-j} \bigl(N_j-R_j w\Lambda^{-1}\bigr) (\psi_2),
\eeq
and
\beq\label{psi1mi1005}
(i+1)! \e \frac{\p \psi_1}{\p m_i} 
=\lambda^{i+1} \psi_1
+ \sum_{j=0}^i \lambda^{i-j} \bigl(N_j-R_j w\Lambda^{-1}\bigr) (\psi_1).
\eeq
By substituting \eqref{psi1mi1005} and \eqref{psi2mi1005} in~\eqref{derivativemi1005}
and by a lengthy but straightforward calculation using also~\eqref{Laxsp1005} and \eqref{property1sigma}, we obtain
\begin{align*}
{\rm e}^{-\Lambda^{-1}(\sigma)} \e\frac{\p \bigl({\rm e}^{\Lambda^{-1}(\sigma)}d\bigr)}{\p m_i}
={}& {-}{} \Lambda^{-1}(S_{i}) d
+ \frac{\Lambda^{-1}(R_{i+1})}{(i+1)!} d \nn\\
&{}{+}{} \frac{d}{(i+1)!}\sum_{j=0}^i \lambda^{i-j} \bigl(N_j+\Lambda^{-1}(N_j)+\Lambda^{-1}(v R_j-R_{j+1})\bigr) = 0. \nn
\end{align*}
Note that in the last equality we used \eqref{deftau2}, \eqref{CDZtausymm2}
and the relation $L^{k+1}=L\circ L^k=L^k \circ L$ (which implies the vanishing of each
summand in the above $\sum_j$).

We proceed to prove~\eqref{id1}. Like in~\cite{Y}, we note that
\begin{align*}
&\Lambda \bigl({\rm e}^{\Lambda^{-1}(\sigma(x,{\bf b},{\bf m};\e))}d(\lambda; x,{\bf b}, {\bf m};\e)\bigr)
 = {\rm e}^{\sigma} (\Lambda(\psi_1) \psi_2
- \psi_1 \Lambda( \psi_2) ) \nn\\
&\qquad{} = {\rm e}^{\sigma} \bigl\{ \bigl((\lambda-v) \psi_1 - w \Lambda^{-1}(\psi_1)\bigr) \psi_2 - \psi_1 \bigl((\lambda-v) \psi_2 - w \Lambda^{-1}(\psi_2)\bigr) \bigr\} \nn\\
&\qquad{} = {\rm e}^{\Lambda^{-1}(\sigma(x, {\bf b},{\bf m};\e))}d(\lambda; x,{\bf b}, {\bf m};\e) \nn
\end{align*}
(the last equality used~\eqref{propertysigma}).
It follows that
\[
\e\p_x \bigl({\rm e}^{\Lambda^{-1}(\sigma(x,{\bf b},{\bf m};\e))}d(\lambda; x, {\bf b}, {\bf m};\e)\bigr)
= \frac{\e\p_x}{\Lambda-1} \circ (\Lambda-1) \bigl({\rm e}^{\Lambda^{-1}(\sigma(x, {\bf b},{\bf m};\e))}
d(\lambda; x, {\bf b}, {\bf m};\e)\bigr) = 0.
\]
Thus,
\begin{align}
0 & = {\rm e}^{-\Lambda^{-1}(\sigma(x,{\bf b},{\bf m};\e))} \p_x \bigl({\rm e}^{\Lambda^{-1}(\sigma(x,{\bf b},{\bf m};\e))}d(\lambda; x, {\bf b}, {\bf m};\e)\bigr) \nn\\
& = - \Lambda^{-1}(h_{1,-1}) + \p_x(d(\lambda; x, {\bf b}, {\bf m};\e)) = - 2\Lambda^{-1}(r_0)d + \p_x(d(\lambda; x, {\bf b}, {\bf m};\e))\nn.
\end{align}
Denote
\[
Q_i = -2 \bigl(L^i(\log L-c_i)\bigr)_- = -2 \bigl(L^i \log L\bigr)_0-2c_i B_{i-1}, \qquad i\ge0.
\]
In the way similar to the derivation for $B_i$, we find
\[
Q_i= \sum_{j=0}^{i-1} \bigl(2m_j-2r_jw\Lambda^{-1}\bigr) \circ L^{i-j} - 2 (\log L)_- \circ L^i - 2c_i B_{i-1}.
\]
Therefore,
\begin{align*}
i! \frac{\p \psi_2}{\p b_i} = Q_i (\psi_2)={}&{}
 \sum_{j=0}^{i-1} \lambda^{i-j} \bigl(2m_j-2r_jw\Lambda^{-1}\bigr) (\psi_2)\\
&{}{-}\, 2 (\log L)_- (\psi_2)
-2c_i\sum_{j=0}^{i-1} \lambda^{i-1-j} \bigl(N_j-R_j w\Lambda^{-1}\bigr) (\psi_2),\nn
\end{align*}
and
\begin{align*}
i! \frac{\p \psi_1}{\p b_i} 
={}&{}2 \lambda^i(\log\lambda-c_i) \psi_1+ \sum_{j=0}^{i-1} \lambda^{i-j} \bigl(2m_j-2r_jw\Lambda^{-1}\bigr) (\psi_1)\\
&{}{-}\, 2 (\log L)_- (\psi_1)
-2c_i\sum_{j=0}^{i-1} \lambda^{i-1-j} \bigl(N_j-R_j w\Lambda^{-1}\bigr) (\psi_1).\nn
\end{align*}
Using these relations and using \eqref{Laxa1bi} and \eqref{Laxa2bi} with $i=0$, and again by a lengthy calculation, we find
\begin{align}
& {\rm e}^{-\Lambda^{-1}(\sigma)}\frac{\p \bigl({\rm e}^{\Lambda^{-1}(\sigma)}d\bigr)}{\p b_i}
= - 2\Lambda^{-1}(r_0)d + \p_x(d) = 0. \nn
\end{align}
The lemma is proved.
\end{proof}

It follows from Lemma~\ref{keylemma} and \eqref{psiphiform1} and \eqref{psiphiform2} that
\[
d(\lambda; x, {\bf b}, {\bf m};\e) = \lambda {\rm e}^{-\Lambda^{-1}(\sigma(x, {\bf b},{\bf m};\e))} \sum_{k\ge0} d_k \lambda^{-k},
\]
where $d_k$, $k\ge0$, are constants with $d_0=1$.
Therefore, for any choice $\psi_1$ of
the wave functions of type~A associated to $(v(x, {\bf b}, {\bf m};\e), w(x, {\bf b}, {\bf m};\e))$, there exists a choice, $\psi_2$, of
the wave functions of type~B associated to $(v(x, {\bf b}, {\bf m};\e), w(x, {\bf b}, {\bf m};\e))$, such that
\beq\label{dformula1002}
d(\lambda; x, {\bf b}, {\bf m};\e) = \lambda {\rm e}^{-\Lambda^{-1}(\sigma(x, {\bf b},{\bf m};\e))}.
\eeq
As a generalization of the terminology in~\cite{Y}, we say that $\psi_1$, $\psi_2$ form {\it a pair of wave functions}
associated to $(v({\bf b}+x{\bf 1}, {\bf m};\e), w({\bf b}+x{\bf 1}, {\bf m};\e))$ if~\eqref{dformula1002} holds.

We now define
\begin{align*}
&D(\lambda,\mu; x, {\bf b}, {\bf m};\e) \\
&\qquad{}:= \frac{\psi_1(\lambda; x, {\bf b}, {\bf m};\e) \psi_2(\mu; x-\e, {\bf b}, {\bf m};\e)
- \psi_1(\lambda; x-\e, {\bf b}, {\bf m};\e) \psi_2(\mu; x, {\bf b}, {\bf m};\e)}{\lambda-\mu}, \nn
\end{align*}
which coincides with the one introduced in~\cite{Y} when restricted to $b_2=b_3=\dots=0$.

We arrive at the following theorem.
\begin{Theorem} The following identity holds:
\begin{align}
& \e^n \sum_{i_1,\dots,i_n\ge0}
\prod_{j=1}^n \frac{(i_j+1)!}{\lambda_j^{i_j+2}} \frac{\p^n \log \tau({\bf b}+x{\bf 1},{\bf m}; \e)}{\p m_{i_1} \cdots \p m_{i_n}}
\nn\\
&\qquad{}= (-1)^{n-1}
\sum_{\text{$n$-cycles}}
\prod_{j=1}^n B\bigl(\lambda_{\sigma(j)},\lambda_{\sigma(j+1)}; x, {\bf b}, {\bf m};\e\bigr) - \frac{\delta_{n,2}}{(\lambda-\mu)^2},\label{Y926-ext-b}
\end{align}
where $n\ge2$ and
\begin{align*}
&B(\lambda,\mu;x, {\bf b},{\bf m};\e)\\
&\qquad{} := \frac{{\rm e}^{\sigma(x-\e, {\bf b},{\bf m};\e)}}{\mu} D(\lambda,\mu; x, {\bf b}, {\bf m};\e)
\frac
{{\rm e}^{\sum_{i\ge0} \frac2{i!} \frac{b_i}\e \mu^i (\log \mu-c_i)+\sum_{i\ge0} \frac{m_i}{(i+1)!\e} \mu^{i+1}}}
{{\rm e}^{\sum_{i\ge0} \frac2{i!} \frac{b_i}\e \lambda^{i} (\log \lambda-c_i)+\sum_{i\ge0} \frac{m_i}{(i+1)!\e} \lambda^{i+1}}}\frac{\mu^{\frac{x}\e}}{\lambda^{\frac{x}\e}}. \nn
\end{align*}
\end{Theorem}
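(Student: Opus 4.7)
The plan is to derive \eqref{Y926-ext-b} by converting the matrix-resolvent identity \eqref{multiderivativesresolvent} into the stated wave-function form, in direct analogy with the passage from \eqref{Y421} to \eqref{Y926} in the 1-Toda case, now using the ETH pair of wave functions $\psi_1, \psi_2$ and the key normalization \eqref{dformula1002} furnished by Lemma~\ref{keylemma}.

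First I would assemble the $2\times 2$ wave matrix
\[
\Psi(\lambda; x,{\bf b},{\bf m};\e) = \begin{pmatrix} \psi_1(\lambda;x,{\bf b},{\bf m};\e) & \psi_2(\lambda;x,{\bf b},{\bf m};\e) \\ \psi_1(\lambda;x-\e,{\bf b},{\bf m};\e) & \psi_2(\lambda;x-\e,{\bf b},{\bf m};\e) \end{pmatrix},
\]
whose determinant, by \eqref{dformula1002}, equals $d(\lambda; x,{\bf b},{\bf m};\e) = \lambda\,{\rm e}^{-\Lambda^{-1}(\sigma)}$. Since $L\psi_a = \lambda\psi_a$ is an $x$-shift equation not involving $b_i$ or $m_i$, the matrix $\Psi$ satisfies $\Lambda(\Psi) = U(\lambda)\Psi$, so the conjugate $\Psi\,{\rm diag}(1,0)\,\Psi^{-1}$ automatically fulfills the resolvent equation $\Lambda(\,\cdot\,)\,U = U\,(\,\cdot\,)$. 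Its trace is $1$, its determinant is $0$, and direct inspection of the $\lambda\to\infty$ asymptotics in \eqref{psiphiform1}--\eqref{psiphiform2} shows its leading term is ${\rm diag}(1,0)$. By uniqueness of the basic matrix resolvent this conjugate equals $R(\lambda)$. Although the $\psi_a$ carry extended-time exponentials in $\lambda$, those factors cancel between $\Psi$ and $\Psi^{-1}$ in the rank-one product, so $R(\lambda)$ depends only on $(v,w)$ and their $\pm\e$-shifts, as required.

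This representation makes $R(\lambda) = \frac{1}{d(\lambda)}\,u(\lambda)\,v(\lambda)^{\!T}$ for column vectors $u,v$ built from $\psi_1, \psi_2$, so a standard telescoping calculation gives
\[
\tr\prod_{j=1}^n R\bigl(\lambda_{\sigma(j)}\bigr) = \frac{\prod_{j=1}^n\bigl(\lambda_{\sigma(j+1)}-\lambda_{\sigma(j)}\bigr)\,D\bigl(\lambda_{\sigma(j+1)},\lambda_{\sigma(j)};x,{\bf b},{\bf m};\e\bigr)}{\prod_{j=1}^n d\bigl(\lambda_{\sigma(j)};x,{\bf b},{\bf m};\e\bigr)}
\]
(up to the obvious cyclic relabeling). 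This is the ETH analogue of the scalar reduction used in \cite{Y}, and extends verbatim since it depends only on the four wave-function values assembled in each $\Psi(\lambda_{\sigma(j)})$ and not on how they evolve in the flow variables.

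Substituting into \eqref{multiderivativesresolvent} and using $d(\lambda) = \lambda\,{\rm e}^{-\Lambda^{-1}(\sigma)}$, each factor of the cyclic product acquires the prefactor ${\rm e}^{\sigma(x-\e,{\bf b},{\bf m};\e)}/\lambda_{\sigma(j)}$ multiplying a $D$-factor; a cyclic relabeling together with the rank-one-trace sign and the minus sign in \eqref{multiderivativesresolvent} produces the overall $(-1)^{n-1}$ in \eqref{Y926-ext-b} and matches the $D$-argument order used in the definition of $B$. To complete the match one inserts into each factor the ratio
\[
\frac{{\rm e}^{\sum_i \frac{2}{i!}\frac{b_i}{\e}\lambda_{\sigma(j+1)}^i(\log\lambda_{\sigma(j+1)}-c_i)+\sum_i \frac{m_i}{(i+1)!\e}\lambda_{\sigma(j+1)}^{i+1}}}{{\rm e}^{\sum_i \frac{2}{i!}\frac{b_i}{\e}\lambda_{\sigma(j)}^i(\log\lambda_{\sigma(j)}-c_i)+\sum_i \frac{m_i}{(i+1)!\e}\lambda_{\sigma(j)}^{i+1}}}\cdot\frac{\lambda_{\sigma(j+1)}^{x/\e}}{\lambda_{\sigma(j)}^{x/\e}},
\]
which collapses to $1$ under the full cyclic product $\prod_{j=1}^n$ and can therefore be freely inserted without changing the value. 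Each factor then becomes $B\bigl(\lambda_{\sigma(j)},\lambda_{\sigma(j+1)};x,{\bf b},{\bf m};\e\bigr)$, producing \eqref{Y926-ext-b}. The main obstacle I expect is the identification of $\Psi\,{\rm diag}(1,0)\,\Psi^{-1}$ with the basic matrix resolvent in the ETH setting --- specifically, the clean cancellation of extended-flow exponentials between $\Psi$ and $\Psi^{-1}$ and the verification of leading asymptotics; once Lemma~\ref{keylemma} secures the pair normalization \eqref{dformula1002}, the remaining cyclic telescoping and sign bookkeeping are short direct checks that mirror the 1-Toda proof in \cite{Y}.
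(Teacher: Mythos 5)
Your proposal is correct and takes essentially the same route as the paper: the paper's proof also forms the wave matrix $\Psi_{\rm pair}$, identifies $R(\lambda)=\Psi_{\rm pair}\,{\rm diag}(1,0)\,\Psi_{\rm pair}^{-1}$ by the arguments of \cite[Proposition~3]{Y}, substitutes into \eqref{multiderivativesresolvent} to obtain the cyclic product of $D$-factors with prefactor ${\rm e}^{n\sigma(x-\e,{\bf b},{\bf m};\e)}/\prod_j\lambda_j$, and then converts to $B$-factors using the cancellation of the exponential prefactors in the cyclic product. You have merely filled in the details (uniqueness characterization of the matrix resolvent, rank-one telescoping of the trace) that the paper delegates to \cite{Y}.
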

\begin{proof}
Let
\[
\Psi_{\rm pair}(\lambda; x, {\bf b}, {\bf m};\e) :=
\begin{pmatrix} \psi_1(\lambda; x, {\bf b}, {\bf m};\e) & \psi_2(\lambda; x, {\bf b}, {\bf m};\e)
\\
\psi_1(\lambda; x-\e, {\bf b}, {\bf m};\e) & \psi_2(\lambda; x-\e, {\bf b}, {\bf m};\e)
\end{pmatrix} .
\]
By using the arguments the same as in~\cite{Y}, one can show that
\beq\label{RPsi}
R(\lambda; x, {\bf b}, {\bf m};\e) = \Psi_{\rm pair}(\lambda; x, {\bf b}, {\bf m};\e) \begin{pmatrix} 1 & 0 \\ 0 & 0\end{pmatrix}
\Psi_{\rm pair}(\lambda; x, {\bf b}, {\bf m};\e)^{-1}
\eeq
(see~\cite[the proof of Proposition~3]{Y}).
It follows from~\eqref{multiderivativesresolvent} and \eqref{RPsi} that for $n\ge2$,
\begin{align*}
&\e^n\sum_{i_1,\dots,i_n\ge0} \prod_{j=1}^n \frac{(i_j+1)!}{\lambda_j^{i_j+2}}
\frac{\p^n \log \tau({\bf b}+x{\bf 1},{\bf m})}{\p m_{i_1} \cdots \p m_{i_n}} \\
&\qquad{}= (-1)^{n-1}
\frac{{\rm e}^{n \sigma(x-\e,{\bf b}, {\bf m};\e)}}{\prod_{j=1}^n \lambda_j} \sum_{\text{$n$-cycles}}
\prod_{j=1}^n D\bigl(\lambda_{\sigma(j)},\lambda_{\sigma(j+1)}; x, {\bf b}, {\bf m};\e\bigr) - \frac{\delta_{n,2}}{(\lambda-\mu)^2}. \nn
\end{align*}
This yields identity~\eqref{Y926-ext-b}.
\end{proof}

By taking ${\bf m}={\bf 0}$, we immediately obtain the following corollary.
\begin{Corollary} We have
\begin{align}
& \e^n\sum_{i_1,\dots,i_n\ge0}
\prod_{j=1}^n \frac{(i_j+1)!}{\lambda_j^{i_j+2}} \frac{\p^n \log \tau({\bf b}+x{\bf 1},{\bf m};\e)}{\p m_{i_1} \cdots \p m_{i_n}}\bigg|_{{\bf m}={\bf 0}}
\nn\\
& \qquad{} = (-1)^{n-1}
\sum_{\text{$n$-cycles}}
\prod_{j=1}^n B\bigl(\lambda_{\sigma(j)},\lambda_{\sigma(j+1)}; x, {\bf b};\e\bigr) - \frac{\delta_{n,2}}{(\lambda-\mu)^2},\label{Y926-ext}
\end{align}
where $B(\lambda,\mu; x, {\bf b};\e):=B(\lambda,\mu; x, {\bf b},{\bf m}={\bf 0};\e)$.
\end{Corollary}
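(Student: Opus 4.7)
The plan is to obtain the Corollary as a direct specialization at ${\bf m}={\bf 0}$ of the identity \eqref{Y926-ext-b} established in the preceding Theorem. Both sides of \eqref{Y926-ext-b} live in a power series ring in ${\bf m}$ (with coefficients that are formal series in $\lambda_j^{-1}$), so evaluating both sides at ${\bf m}={\bf 0}$ is a legitimate operation producing the desired identity \eqref{Y926-ext}.

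For the left-hand side: the $n$-th derivative $\p^n \log\tau({\bf b}+x{\bf 1},{\bf m};\e)/\p m_{i_1}\cdots \p m_{i_n}$ appearing in \eqref{Y926-ext-b} is an element of the power series ring in ${\bf m}$, and applying $|_{{\bf m}={\bf 0}}$ term-by-term reproduces exactly the sum on the left of \eqref{Y926-ext}. For the right-hand side: the only ${\bf m}$-dependence is carried by the factors $B(\lambda_{\sigma(j)},\lambda_{\sigma(j+1)};x,{\bf b},{\bf m};\e)$, and the notation $B(\lambda,\mu;x,{\bf b};\e):=B(\lambda,\mu;x,{\bf b},{\bf m}={\bf 0};\e)$ introduced in the statement of the Corollary is designed precisely to record this specialization. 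The term $\delta_{n,2}/(\lambda-\mu)^2$ is independent of ${\bf m}$ and is preserved verbatim.

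Because the Corollary is a one-step evaluation, there is no real technical obstacle; the work was done in proving the Theorem. The reason it is worth isolating as a separate statement is structural: the right-hand side of \eqref{Y926-ext} is packaged in exactly the shape of the $n$-point function formula \eqref{eqn:G-n}, with affine coordinates read off from the expansion of $B(\lambda,\mu;x,{\bf b};\e)-\frac{1}{\lambda-\mu}$ in $\lambda^{-1},\mu^{-1}$ (guaranteed to be a power series by the same analysis that established \eqref{eqn:hat-A-26} in Section~\ref{sectionToda}). Combined with the change of variables \eqref{tT30}, this puts us in a position to invoke Corollary~\ref{conversthmA}, which is exactly the mechanism used to deduce Theorem~\ref{thm:toda-kp-ext} from the ETH analogue of the matrix-resolvent/$n$-point-function formula.
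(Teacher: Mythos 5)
Your proposal is correct and matches the paper's own argument exactly: the Corollary is obtained by setting ${\bf m}={\bf 0}$ in identity~\eqref{Y926-ext-b} of the preceding Theorem, with the notation $B(\lambda,\mu;x,{\bf b};\e)$ recording that specialization. Nothing further is needed.
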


In terms of the functions $\phi_1$ and $\phi_2$, the pair condition~\eqref{dformula1002} reads
\begin{gather}
\phi_1(\lambda; x, {\bf b}, {\bf m};\e) \phi_2(\lambda; x-\e, {\bf b}, {\bf m};\e)\nn \\
\qquad{}- w(x, {\bf b}, {\bf m};\e)
\lambda^{-2} \phi_2(\lambda; x, {\bf b}, {\bf m};\e) \phi_1(\lambda; x-\e, {\bf b}, {\bf m};\e) \equiv 1,\label{pairconditionphi}
\end{gather}
and the function $B(\lambda,\mu;{\bf b},{\bf m};\e)$ reads
\begin{align}
&B(\lambda,\mu; x, {\bf b}, {\bf m};\e) \label{Bphiab} \\
&\quad{}:= \frac{\phi_1(\lambda; x, {\bf b}, {\bf m};\e) \phi_2(\mu; x-\e,{\bf b}, {\bf m};\e)
- \frac{w(x, {\bf b}, {\bf m};\e)}{\lambda\mu} \phi_1(\lambda; x-\e,{\bf b}, {\bf m};\e) \phi_2(\mu; x, {\bf b}, {\bf m};\e)}{\lambda-\mu}. \nn
\end{align}

\begin{Lemma}
The function $B(\lambda,\mu; x, {\bf b},{\bf m};\e)$ admits the following expansion:
\beq\label{eqn:hat-A-26-extwitht}
B(\lambda,\mu; x, {\bf b},{\bf m};\e) =
\frac1{\lambda-\mu} + \sum_{i,j\ge0} \frac{A_{i,j}(x, {\bf b},{\bf m};\e)}{\lambda^{j+1} \mu^{i+1}} .
\eeq
In particular,
\beq\label{eqn:hat-A-26-ext}
B(\lambda,\mu;x, {\bf b};\e) =
\frac1{\lambda-\mu} + \sum_{i,j\ge0} \frac{A_{i,j}(x, {\bf b};\e)}{\lambda^{j+1} \mu^{i+1}} ,
\eeq
where
$A_{i,j}(x, {\bf b};\e)=A_{i,j}(x, {\bf b},{\bf m}={\bf 0};\e)$.
\end{Lemma}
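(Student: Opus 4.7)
The plan is to use the pair condition \eqref{pairconditionphi} to check that the numerator of the rational expression defining $B$ in \eqref{Bphiab} takes the value $1$ along the diagonal $\lambda=\mu$, which is exactly the divisibility needed for the $1/(\lambda-\mu)$ term to cancel out. Writing the quotient as a combination of divided differences one then reads off the expansion \eqref{eqn:hat-A-26-extwitht}; the specialization ${\bf m}={\bf 0}$ yields \eqref{eqn:hat-A-26-ext}.

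Concretely, I would denote by $N(\lambda,\mu)$ the numerator of~\eqref{Bphiab}, so that $B(\lambda,\mu;x,{\bf b},{\bf m};\e) = N(\lambda,\mu)/(\lambda-\mu)$. The pair condition \eqref{pairconditionphi} is exactly the identity $N(\mu,\mu)=1$. Subtracting $N(\mu,\mu)$ from $N(\lambda,\mu)$ and regrouping terms gives
\[
N(\lambda,\mu)-1 = \phi_2(\mu;x-\e)\bigl[\phi_1(\lambda;x)-\phi_1(\mu;x)\bigr] + \tfrac{w(x)}{\mu}\phi_2(\mu;x)\Bigl[\tfrac{\phi_1(\mu;x-\e)}{\mu}-\tfrac{\phi_1(\lambda;x-\e)}{\lambda}\Bigr],
\]
where I suppress the dependence on ${\bf b}, {\bf m}, \e$. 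Both square brackets vanish at $\lambda=\mu$, so dividing by $\lambda-\mu$ yields well-defined formal series in $\lambda^{-1}$, $\mu^{-1}$.

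The explicit form then follows from the elementary identity
\[
\frac{\lambda^{-k}-\mu^{-k}}{\lambda-\mu} = -\sum_{\substack{i,j\ge 0\\ i+j=k-1}}\lambda^{-j-1}\mu^{-i-1},\qquad k\ge 1,
\]
combined with the expansions \eqref{phiexpansion1}--\eqref{phiexpansion2}: each divided difference lies in a $\lambda^{-1}\mu^{-1}$-multiple of formal power series in $\lambda^{-1}, \mu^{-1}$, and multiplication by $\phi_2(\mu;\cdot)=1+O(\mu^{-1})$ and by $w(x)/\mu$ preserves this property. The only step requiring care is reorganizing indices so as to collect everything into a single double sum $\sum_{i,j\ge 0} A_{i,j}(x,{\bf b},{\bf m};\e)\lambda^{-j-1}\mu^{-i-1}$; this is pure bookkeeping since no further cancellation is needed—the removal of the singular $1/(\lambda-\mu)$ term is entirely supplied by the pair condition \eqref{pairconditionphi}, which in turn came from Lemma~\ref{keylemma} via \eqref{dformula1002}.
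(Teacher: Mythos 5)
Your proof is correct and follows essentially the same route as the paper: both arguments use the pair condition \eqref{pairconditionphi} to cancel the pole along $\lambda=\mu$ and then organize the remainder into a double series in $\lambda^{-1}$, $\mu^{-1}$ via divided differences. The only cosmetic difference is that the paper Taylor-expands $\phi_2(\mu)$ and $\phi_2(\mu)/\mu$ about $\lambda$ with an exact first-order remainder, whereas you subtract $N(\mu,\mu)=1$ and take divided differences of $\phi_1$; your bookkeeping of why each term carries at least one negative power of both $\lambda$ and $\mu$ is, if anything, slightly more explicit.
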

\begin{proof}
Recall the following identity:
\begin{align*}
\phi_2(\mu) = \phi_2(\lambda)+\phi_2'(\lambda)(\mu-\lambda)
+ (\mu-\lambda)^2 \p_\lambda \left(\frac{\phi_2(\lambda)-\phi_2(\mu)}{\lambda-\mu}\right),
\end{align*}
where we omit the arguments $x$, ${\bf b}$, ${\bf m}$ from $\phi_2(\lambda; x, {\bf b}, {\bf m};\e)$.
Similarly,
\[
\frac{\phi_2(\mu)}{\mu} = \frac{\phi_2(\lambda)}{\lambda}+\p_\lambda\biggl(\frac{\phi_2(\lambda)}{\lambda}\biggr) (\mu-\lambda)
+ (\mu-\lambda)^2 \p_\lambda \biggl(\frac{\phi_2(\lambda)/\lambda-\phi_2(\mu)/\mu}{\lambda-\mu}\biggr).
\]
Substituting these identities in~\eqref{Bphiab} and using \eqref{pairconditionphi}, \eqref{phiexpansion1} and \eqref{phiexpansion2}, we find the validity of the expansion~\eqref{eqn:hat-A-26-extwitht}.
\end{proof}

\begin{proof}[Proof of Theorem~\ref{thm:toda-kp-ext}]
By using \eqref{Y926-ext} and \eqref{eqn:hat-A-26-ext} and Corollary~\ref{conversthmA}.
\end{proof}

Like in the previous section, if we write
\[
\tau({\bf b}+x{\bf 1},{\bf m};\e)=\tau_0({\bf b}+x{\bf 1};\e) \tau_1({\bf b}+x{\bf 1}, {\bf m};\e),
\qquad \tau_1({\bf b}+x{\bf 1}, {\bf 0};\e) \equiv 1,
\]
then by Theorem~\ref{thm:toda-kp-ext} we know that $\tau_1({\bf b}+x{\bf 1}, {\bf m};\e)$ is a KP tau-function in the big cell.
The factor $\tau_0({\bf b}+x{\bf 1};\e)$ can be determined from the definition of the tau-structure~\cite{CDZ}.
Let $\tilde{\tau}_1(x, {\bf b}, {\bf m};\e)$ be the~KP tau-function
associated to the point in the infinite Grassmannian with affine coordinates
$A_{i,j}(x, {\bf b};\e)$, with $A_{i,j}(x, {\bf b};\e)$ given in~\eqref{eqn:hat-A-26-ext}. Then there exists
$a_i(x, {\bf b};\e)$, $i\ge0$, such~that
\[
\tau_1({\bf b}+x{\bf 1}, {\bf m};\e) = {\rm e}^{\sum_{i\ge0} a_i(x, {\bf b};\e) m_i} \tilde{\tau}_1(x, {\bf b}, {\bf m};\e).
\]
By further applying Proposition~\ref{VVprimelinear}, we can find a formula for the affine coordinates for
$\tau_1({\bf b}+x{\bf 1}, {\bf m};\e)$, whose
dependence on $b_0, x$ is through $b_0+x$.

\section[Application to Gromov--Witten invariants of P\string^{}1]{Application to Gromov--Witten invariants of~$\boldsymbol{\mathbb{P}^1}$} \label{sectionP1}
In this section, we give an application of the above results regarding the topological solution to the ETH.

Recall that the {\it free energy $\mathcal{F}^{\bP^1}=\mathcal{F}^{\bP^1}({\bf b}, {\bf m};\e,q)$ of GW invariants of~$\bP^1$} is defined by
\begin{gather*}
\mathcal{F}^{\bP^1} = \sum_{k\geq 1} \frac1{k!}
\sum_{1\leq \alpha_1,\dots,\alpha_k\leq 2 \atop i_1,\dots,i_k\geq 0} \prod_{j=1}^k t^{\alpha_j,i_j} \sum_{g\ge0} \e^{2g-2} \sum_{d\ge0} q^d
\langle \tau_{i_1} (\alpha_1) \cdots \tau_{i_k} (\alpha_k) \rangle^{\bP^1}_{g,d}
\end{gather*}
where 
$b_i=t^{1,i}$, $m_i=t^{2,i}$, $i\ge0$,
and $\langle \tau_{i_1} (\alpha_1) \cdots \tau_{i_k} (\alpha_k) \rangle^{\bP^1}_{g,d}$ are
genus~$g$ and degree~$d$ GW invariants of~$\bP^1$ 
\cite{DYZ, DZ-toda, OP2, OP1}.
When $\alpha_1=\dots=\alpha_k=2$ the GW invariants \smash{$\langle \tau_{i_1} (\alpha_1) \cdots \tau_{i_k} (\alpha_k) \rangle^{\bP^1}_{g,d}$}
are~called in the {\it stationary sector}~\cite{OP2, OP1}, and are denoted as
\smash{$\langle \prod_{j=1}^n \tau_{i_j} (H) \rangle^{\bP^1}_{g,d}$} in some literatures.
The exponential
\[
\exp \mathcal{F}^{\bP^1}({\bf b}, {\bf m};\e,q) =: Z^{\bP^1}({\bf b}, {\bf m};\e,q)
\]
is called the {\it partition function of GW invariants of~$\bP^1$}. The restrictions
\[
\mathcal{F}^{\bP^1}({\bf b}={\bf 0}, {\bf m};\e,q)
=: \mathcal{F}^{\bP^1}({\bf m};\e,q), \qquad Z^{\bP^1}({\bf b}={\bf 0}, {\bf m};\e,q) =:
Z^{\bP^1}({\bf m};\e,q)
\]
are called the free energy and respectively the partition function of stationary GW invariants of~$\bP^1$,
which are of particular interest due to their closed formulas~\cite{DYZ} and their relations to
the representations of the symmetric group~\cite{OP1}.

It is proved by Dubrovin and Zhang~\cite{DZ-toda} that assuming Virasoro constraints
the partition function
$Z^{\bP^1}({\bf b}+x{\bf 1},{\bf m};\e,q=1)=:Z^{\bP^1}({\bf b}+x{\bf 1},{\bf m};\e)$ of GW invariants of~$\bP^1$ is a particular tau-function
for the ETH.
\big(Note that by degree-dimension counting the Novikov variable~$q$ can be recovered via a rescaling \smash{$\epsilon\to q^{-1/2}\e$}, \smash{$b_i\to q^{(i-1)/2} b_i$}, and \smash{$m_i\to q^{i/2} m_i$}.\big)
Then, as given in Introduction, the result of Dubrovin--Zhang together with Theorem~\ref{thm:toda-kp-ext}
implies Corollary~\ref{ThmMainP1}.
As before, we can write the partition function $Z^{\bP^1}({\bf b}+x{\bf 1},{\bf m};\e)$ as follows:
\[
Z^{\bP^1}({\bf b}+x{\bf 1}, {\bf m};\e) = Z^{\bP^1}_{\rm corr}({\bf b}+x{\bf 1};\e) \cdot Z^{\bP^1}_{\rm 1}({\bf b}+x{\bf 1}, {\bf m};\e),
\]
where the factor $Z^{\bP^1}_{\rm corr}({\bf b}+x{\bf 1};\e)$ can be determined by using the definition of tau-function for the ETH~\cite{DZ-toda},
and the factor $Z^{\bP^1}_{\rm 1}({\bf b}+x{\bf 1}, {\bf m};\e)$, which
satisfies $Z^{\bP^1}_{\rm 1}({\bf b}+x{\bf 1}, {\bf 0};\e)\equiv 1$,
is a particular KP tau-function in the big cell.

 Denote $Z^{\bP^1}_{\rm 1}({\bf b}+x{\bf 1}, {\bf m};\e,q)$ the normalized factor with $q$ recovered,
 and by $A^{\bP^1}_{i,j}({\bf b}+x{\bf 1};\e,q)$ the affine
 coordinates of \smash{$Z^{\bP^1}_{\rm 1}({\bf b}+x{\bf 1}, {\bf m};\e,q)$}.
The restrictions of these affine coordinates to ${\bf b}={\bf 0}$, $x=0$
are of particular interest, and will be denoted by \smash{$A^{\bP^1}_{i,j}(\e,q)$}.
In the next two subsections, we will derive two formulas for \smash{$A^{\bP^1}_{i,j}(q):=A^{\bP^1}_{i,j}(1,q)$}.

\subsection[Explicit generating series for A\string^{}\{P\string^1\}\_\{i,j\}(q)]{Explicit generating series for $\boldsymbol{A^{\bP^1}_{i,j}(q)}$}
It was proved in~\cite[Theorems 1--5 and equation~(93)]{DYZ} (see also~\cite{DY2, Marchal, Y}) that, for each $n\ge2$,
\begin{align*}
& \sum_{i_1,\dots,i_n\geq0} \prod_{j=1}^n \frac{(i_j+ 1)!}{\xi^{i_j+2}_j} \,
\sum_{g\geq 0}\sum_{d\geq 0} q^d
\Biggl\langle \prod_{j=1}^n \tau_{i_j} (2) \Biggr\rangle^{\bP^1}_{g,d} \\
&\qquad{}= (-1)^{n-1}
\sum_{\text{$n$-cycles}}
\prod_{j=1}^n B\bigl(\xi_{\sigma(j)}, \xi_{\sigma(j+1)};q\bigr) - \frac{\delta_{n,2}}{(\xi_1-\xi_2)^2}, \nn
\end{align*}
where $B(\lambda,\mu;q)$ is the asymptotic expansion of the following analytic function
\be\label{anaB}
-\frac1{\mu-\lambda}\, {}_2 F_3\biggl(\frac{\lambda-\mu}2, \frac{\lambda-\mu+1}2; \, \frac12 - \mu \,, \, \frac12 + \lambda \,, \lambda-\mu+1; \, -4 q\biggr)
\ee
for $\lambda,\mu\notin\mathbb{Z}+\frac12$ as $\lambda,\mu\to\infty$,
and is explicitly given by
\begin{align}
B(\lambda, \mu;q) ={}& \frac{1}{\lambda - \mu} \label{Dlargeabexpansion} \\
&{}{-}
\sum_{i,j\geq 0} \frac{(-1)^{j+1}}{\lambda^{j+1} \mu^{i+1} }
\sum_{k\geq 1} \frac{q^{k}}{k!}\sum_{1\leq i_1,j_1 \leq k}\!
(-1)^{i_1+j_1} \frac{(i_1 + j_1 - 2k)_{k-1} \bigl(i_1 -\frac{1}{2}\bigr)^{i}
\bigl(j_1 - \frac{1}{2}\bigr)^{j}}{(i_1-1)!(j_1-1)!(k-i_1)!(k-j_1)!}.\nn
\end{align}

If we define \smash{$A^{\bP^1,+}_{i,j}(q)$} via
\be \label{eqn:hat-A}
\sum_{i,j \geq 0} A^{\bP^1,+}_{i,j}(q) \lambda^{-j-1} \mu^{-i-1} := B(\lambda,\mu; q) - \frac{1}{\lambda - \mu}
\ee
and let $Z^{\bP^1,+}({\bf m},q)$ be the KP tau-function associated to the point in $Gr_0$
with the affine coordinates \smash{$A^{\bP^1,+}_{i,j}(q)$} (i.e., using formula~\eqref{fromaffinetotau}),
then 
we know that \smash{$Z^{\bP^1}_1({\bf 0},{\bf m};1, q)=Z^{\bP^1}({\bf m};1,q)$} and~\smash{$Z^{\bP^1,+}({\bf m},q)$} can only differ by multiplication by the exponential of
a particular linear function of~${\bf m}$.
\begin{Proposition} \label{positive} We have
\beq\label{toshowpositive}
\log Z^{\bP^1,+}({\bf m},q) = \sum_{n\geq 1} \frac1{n!}
\sum_{i_1,\dots,i_n\geq 0} \prod_{j=1}^n m_{i_j} \sum_{g\ge0}\sum_{d>0} q^d
\Biggl\langle \prod_{j=1}^n \tau_{i_j} (2) \Biggr\rangle^{\bP^1}_{g,d} .
\eeq
\end{Proposition}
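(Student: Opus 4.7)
The plan is to show that \smash{$\log Z^{\bP^1,+}({\bf m},q)$} and the right-hand side of~\eqref{toshowpositive}, which I will denote $F^+({\bf m},q)$, have matching $n$-point functions at~${\bf m}={\bf 0}$ for every $n\ge 1$. Since both are formal power series in ${\bf m}$ vanishing at ${\bf m}={\bf 0}$, such a matching forces them to be equal.

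For $n\ge 2$ the matching is nearly automatic. First I would apply Theorem~\ref{thm:n-point-KP} to $Z^{\bP^1,+}$: by construction its affine coordinates are the $A^{\bP^1,+}_{i,j}(q)$ of~\eqref{eqn:hat-A}, so formula~\eqref{eqn:G-n} gives the $n\ge 2$ point functions of \smash{$\log Z^{\bP^1,+}$} in terms of the kernel $B(\xi,\eta;q)$ from~\eqref{Dlargeabexpansion}. On the other hand, the cited formula of~\cite{DYZ} expresses the $n\ge 2$ point functions of $\mathcal F^{\bP^1}({\bf 0},{\bf m};1,q)=\log Z^{\bP^1}({\bf m};1,q)$ in exactly the same $B$-form, but with the degree sum running over all $d\ge 0$. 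To reconcile this with the $d>0$ sum defining $F^+$, I would invoke the geometric vanishing \smash{$\langle\prod_{j=1}^n\tau_{i_j}(H)\rangle^{\bP^1}_{g,0}=0$} for $n\ge 2$: degree-zero stable maps to $\bP^1$ are constant, so on $\overline M_{g,n}(\bP^1,0)\cong\overline M_{g,n}\times\bP^1$ every evaluation map is the projection to $\bP^1$, and $\prod_{j=1}^n \mathrm{ev}_j^\ast(H)=\pi^\ast(H^n)=0$ since $H^2=0$. Hence the $d\ge 0$ and $d>0$ sums coincide for $n\ge 2$, and the $n\ge 2$ point functions of \smash{$\log Z^{\bP^1,+}$} and of $F^+$ agree.

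The matching of all derivatives of order $\ge 2$ forces \smash{$\log Z^{\bP^1,+}({\bf m},q)-F^+({\bf m},q)=C_0(q)+\sum_{i\ge 0}C_i(q)\,m_i$} for some coefficients. The constant vanishes: \smash{$Z^{\bP^1,+}({\bf 0},q)=c_\emptyset=1$} by \eqref{fromaffinetotau} and $F^+({\bf 0},q)=0$ by inspection, so $C_0\equiv 0$. For the linear coefficients I would compare $1$-point functions at ${\bf m}={\bf 0}$. Theorem~\ref{thm:n-point-KP} yields
\[
\sum_{i\ge 0}\frac{(i+1)!}{\xi^{i+2}}\,\frac{\p\log Z^{\bP^1,+}}{\p m_i}\bigg|_{{\bf m}={\bf 0}}=\sum_{i,j\ge 0}A^{\bP^1,+}_{i,j}(q)\,\xi^{-i-j-2},
\]
while the corresponding derivative of $F^+$ equals $\sum_g\sum_{d>0}q^d\langle\tau_i(H)\rangle^{\bP^1}_{g,d}$. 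So $C_i(q)=0$ reduces to the identity $\sum_{a+b=i}A^{\bP^1,+}_{a,b}(q)=(i+1)!\sum_g\sum_{d>0}q^d\langle\tau_i(H)\rangle^{\bP^1}_{g,d}$, which I would verify from the explicit series~\eqref{Dlargeabexpansion} together with the known closed form for the one-point stationary GW generating function of $\bP^1$ (alternatively, from a direct one-point matrix-resolvent extraction analogous to the $n\ge 2$ derivation behind~\eqref{Y926-ext}).

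The main obstacle is this final one-point identification. Both the \cite{DYZ} formula and the matrix-resolvent identity~\eqref{Y926-ext} carry a $\delta_{n,2}$ correction and are intrinsically $n\ge 2$ statements, so the $n=1$ case is not a specialisation of either and must be treated independently; this is the only genuinely non-routine step. Once it is established, the affine-linear discrepancy vanishes and Proposition~\ref{positive} follows.
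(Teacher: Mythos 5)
Your proposal follows essentially the same route as the paper: the $n\ge 2$ part is handled by the vanishing of the degree-zero stationary invariants (the paper simply cites~\cite{OP1} for this), the discrepancy is thereby reduced to an affine-linear function of ${\bf m}$, and the whole remaining burden falls on the one-point comparison, exactly as you identify. The step you leave as ``to be verified'' is precisely what the paper's proof carries out: using the one-point formula of~\cite{DYZ} and the asymptotic expansion of the digamma function, it computes the linear difference to be $-\sum_{k\ge1}\bigl(1-2^{-k}\bigr)\zeta(-k)\,m_{k-1}/(k-1)!$ and identifies this with the degree-zero part of the stationary free energy, which is equivalent to the one-point identity you reduce to.
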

\begin{proof}
We already know that $\log Z^{\bP^1}({\bf m};1,q)- \log Z^{\bP^1,+}({\bf m},q)$ is a linear function of~${\bf m}$.
Below, we will often omit the arguments ``$1$, $q$'', ``$q$'' for simplifying the notations.
According to, e.g.,~\cite[p.~529]{OP1}, \smash{$\langle \prod_{j=1}^n \tau_{i_j} (2) \rangle^{\bP^1}_{g,0}$}
vanish whenever $n\ge2$, thus we get the validity of the nonlinear part of~\eqref{toshowpositive}.
By using~\cite[equations~(31), (32) and (34)]{DYZ} and
the asymptotic expansion of the digamma function
\[
\psi\biggl(\xi+\frac12\biggr) \sim \log \xi + \sum_{m\ge2} (-1)^{m-1} \bigl(2^{1-m}-1\bigr) \frac{B_m}m \xi^{-m}
\]
as $\xi$ being large with $\operatorname{arg} \xi <\pi-\epsilon$, we calculate out the linear function
\begin{gather}\label{onepoint1004}
\log Z^{\bP^1}({\bf m}) - \log Z^{\bP^1,+}({\bf m}) = - \sum_{k \geq 1} \bigl(1-2^{-k}\bigr) \zeta(-k) \frac{m_{k-1}}{(k-1)!},
\end{gather}
with $\zeta(-k)$ given by
\[
\zeta(-k) = (-1)^k \frac{B_{k+1}}{k+1}, \qquad k\ge1.
\]
Here $B_j$ denotes the $j$-th Bernoulli number.
The proposition is then
proved by identifying this linear function with the degree
zero part of stationary $\bP^1$ free energy (see~\cite[p.~163]{DYZ} or~\cite[equation~(0.26)]{OP1}).
\end{proof}

Proposition~\ref{positive}, formula~\eqref{onepoint1004} and Proposition~\ref{VVprimelinear} imply the following theorem.
\begin{Theorem}\label{affine1}
The generating series of the affine coordinates \smash{$A^{\bP^1}_{i,j}(q)$}
is given by
\begin{gather*}
\sum_{i,j \geq 0} A^{\bP^1}_{i,j}(q) \xi^{-j-1} \eta^{-i-1}
=B(\xi,\eta;q) \, \frac{{\rm e}^{-\bigl(\log \frac{\Gamma(\xi+\frac{1}{2})}{\sqrt{2\pi}} -\xi\log \xi + \xi\bigr)}}
{{\rm e}^{-\bigl(\log \frac{\Gamma(\eta+\frac{1}{2})}{\sqrt{2\pi}} -\eta\log \eta + \eta\bigr)}} - \frac1{\xi-\eta},
\end{gather*}
where $B(\xi,\eta;q)$ is given by~\eqref{anaB} $($cf.\ $\eqref{Dlargeabexpansion})$, and $\Gamma\bigl(\xi+\frac{1}{2}\bigr)$ is understood as its asymptotic expansion for $\xi$ large with $\operatorname{arg} \xi <\pi-\epsilon$.
\end{Theorem}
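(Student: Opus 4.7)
The plan is to stitch together three ingredients that have already been established in Section~\ref{sectionP1}: Proposition~\ref{positive}, the explicit formula~\eqref{onepoint1004}, and the transformation rule from Proposition~\ref{VVprimelinear}. First I would use Proposition~\ref{positive} to observe that $\log Z^{\mathbb{P}^1}(\mathbf{m};1,q)$ and $\log Z^{\mathbb{P}^1,+}(\mathbf{m},q)$ agree on all nonlinear terms in $\mathbf{m}$, so that $Z^{\mathbb{P}^1}(\mathbf{m};1,q)$ and $Z^{\mathbb{P}^1,+}(\mathbf{m},q)$ are two KP tau-functions differing only by the exponential of a linear function of $\mathbf{m}$. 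Formula~\eqref{onepoint1004} then makes this linear function explicit. After translating via $p_k = m_{k-1}/(k-1)!$ (which follows from the relation~\eqref{tT30} at $\epsilon=1$), one obtains a relation of the form $Z^{\mathbb{P}^1}(\mathbf{m};1,q) = e^{\sum_{k\ge 1} C_k p_k}\,Z^{\mathbb{P}^1,+}(\mathbf{m},q)$ with coefficients $C_k$ read off from~\eqref{onepoint1004} in terms of $(1-2^{-k})\zeta(-k)$.

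Next I would apply Proposition~\ref{VVprimelinear} with $V$ the Sato point corresponding to $Z^{\mathbb{P}^1,+}$ (whose affine coordinates are $A^{\mathbb{P}^1,+}_{i,j}(q)$) and $V'$ the one corresponding to $Z^{\mathbb{P}^1}$ (whose affine coordinates are $A^{\mathbb{P}^1}_{i,j}(q)$). Combined with the definition~\eqref{eqn:hat-A} of $A^{\mathbb{P}^1,+}_{i,j}(q)$, this gives
\[
\sum_{i,j\ge 0} A^{\mathbb{P}^1}_{i,j}(q)\,\xi^{-j-1}\eta^{-i-1} + \frac{1}{\xi-\eta} \;=\; B(\xi,\eta;q)\,\frac{e^{-\sum_{\ell\ge 1} C_\ell\,\xi^{-\ell}}}{e^{-\sum_{\ell\ge 1} C_\ell\,\eta^{-\ell}}}.
\]
Thus everything reduces to identifying the formal series $\sum_{\ell\ge 1} C_\ell\,\xi^{-\ell}$ with the asymptotic expansion of $\log\frac{\Gamma(\xi+1/2)}{\sqrt{2\pi}} - \xi\log\xi + \xi$ for large $\xi$.

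For this last step I would differentiate in $\xi$: the derivative of $\log\frac{\Gamma(\xi+1/2)}{\sqrt{2\pi}} - \xi\log\xi + \xi$ equals $\psi(\xi+1/2) - \log\xi$, whose asymptotic expansion is exactly the digamma series used in the proof of Proposition~\ref{positive}. Integrating that series term by term in $\xi^{-m}$ produces a series with coefficients in Bernoulli numbers, which I would then match against $C_\ell = -(1-2^{-\ell})\zeta(-\ell)$ via the standard identity $\zeta(-k) = (-1)^k B_{k+1}/(k+1)$. The constant of integration vanishes because $\log\Gamma(\xi+1/2) - \tfrac12\log(2\pi) - \xi\log\xi + \xi \to 0$ as $\xi\to\infty$ (direct Stirling check). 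Plugging the identification back into the VVprimelinear relation and moving $\frac{1}{\xi-\eta}$ to the other side yields exactly the formula in the statement.

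The conceptual structure is straightforward; the main (minor) obstacle is the bookkeeping in the last step, where one has to be careful about signs and about the factor of $\ell$ coming from $p_\ell = \ell T_\ell$ when relating the linear-shift coefficients $C_\ell$ to the terms in the $\psi(\xi+1/2)-\log\xi$ expansion. Once the normalizations match up, the identification is tautological.
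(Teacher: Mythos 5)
Your route is the paper's own: Theorem~\ref{affine1} is obtained there by combining Proposition~\ref{positive}, formula~\eqref{onepoint1004} and Proposition~\ref{VVprimelinear}, and your first two paragraphs flesh this out in the intended way (the two partition functions differ by the exponential of a linear form in ${\bf p}$, one applies Proposition~\ref{VVprimelinear} with $V=Z^{\bP^1,+}$, $V'=Z^{\bP^1}$, and uses~\eqref{eqn:hat-A}). The problem is the final matching, which you declare ``tautological'' but which fails with the value of $C_\ell$ you use. The $\ell$-th coefficient of the asymptotic expansion of $\log\frac{\Gamma(\xi+\frac12)}{\sqrt{2\pi}}-\xi\log\xi+\xi$ is
\[
\frac{(-1)^{\ell}\bigl(1-2^{-\ell}\bigr)B_{\ell+1}}{\ell(\ell+1)}=\frac{\bigl(1-2^{-\ell}\bigr)\zeta(-\ell)}{\ell},
\]
i.e., the series begins $-\frac{1}{24}\xi^{-1}+\frac{7}{2880}\xi^{-3}+\cdots$, whereas you take $C_\ell=-\bigl(1-2^{-\ell}\bigr)\zeta(-\ell)$ by reading~\eqref{onepoint1004} at face value together with $p_k=m_{k-1}/(k-1)!$. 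These differ by the factor $-\ell$, so the two series do not agree and the identity you want to ``integrate and match'' is false as you set it up; the factor of $\ell$ you flag as minor bookkeeping is precisely the obstruction.

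The resolution is to recompute the linear shift rather than import it from~\eqref{onepoint1004} as printed. Since $\bp_k(\lambda)=\hat{\bp}_k(\lambda)+\bigl(1-2^{-k}\bigr)\zeta(-k)$ with a shift constant in $\lambda$, one gets directly $\log Z^{\bP^1}({\bf m})-\log Z^{\bP^1,+}({\bf m})=\sum_{i\ge0}\frac{m_i}{(i+1)!}\bigl(1-2^{-i-1}\bigr)\zeta(-i-1)=\sum_{k\ge1}\frac{(1-2^{-k})\zeta(-k)}{k}\,p_k$, so the correct coefficient is $C_k=\frac{(1-2^{-k})\zeta(-k)}{k}$ (the printed~\eqref{onepoint1004} differs from this by a sign and by $(k-1)!$ in place of $k!$). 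With this $C_k$ your integration of the digamma expansion matches term by term, and the outcome is confirmed by the worked examples after Corollary~\ref{coridrepele}: $A^{\bP^1}_{0,0}=A^{\bP^1,+}_{0,0}+C_1=q-\frac1{24}$ and $A^{\bP^1}_{1,0}=\frac{q^2}{2}+\frac{11q}{24}+\frac{1}{1152}$, whereas your $C_1=+\frac1{24}$ would give $A^{\bP^1}_{0,0}=q+\frac1{24}$. So the architecture of your argument is right, but the step ``match against $C_\ell=-\bigl(1-2^{-\ell}\bigr)\zeta(-\ell)$'' must be replaced by an independent derivation of the linear term.
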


\begin{Remark}
We consider the computations of $\log Z^{\bP^1}_0({\bf b})$ and $A^{\bP^1}_{i,j}({\bf b})$
 by using Virasoro constraints~\cite{DZ-toda} and the ETH in a future publication.
\end{Remark}

\subsection{Affine coordinates from the Okounkov--Pandharipande formula}
We proceed with recalling the Okounkov--Pandharipande formula~\cite{OP1} for $Z^{\bP^1}({\bf m})$, which is
derived from the GW/H (Gromov--Witten/Hurwitz)
correspondence established by Okounkov and Pandharipande~\cite{OP1}.

To consider the GW invariants of~$\bP^1$,
one has to consider all the moduli spaces $\Mbar_{g,n}\bigl(\bP^1, d\bigr)$ of stable maps,
where $g$ is the genus of the domain curve,
$n$ is the number of marked point on the domain curve,
$d$ is the degree of the map.
For the purpose of GW/H correspondence,
it is necessary to consider the $n=0$ case.
The expected dimension of $\Mbar_{g,0}\bigl(\bP^1, d\bigr)$ is
$2g-2+2d$.
When
\[
2g-2+2d = 0,
\]
we have a contribution of $q^d$ to the free energy.
This happens only when $(g,d) = (0,1)$ and ${(g,d)=(1,0)}$,
The first case corresponds to $\Mbar_{0,0}\bigl(\bP^1,1\bigr)$,
and the second case is impossible.

With the contribution of zero point correlators,
the free energy of the stationary GW invariants of $\bP^1$ is
then
\[
\tilde{\F}^{\bP^1}({\bf m}) = q + \F^{\bP^1}({\bf m}).
\]
Let $\tilde{Z}^{\bP^1}({\bf m}) = \exp \tilde{F}^{\bP^1}({\bf m})$ be the corrected partition function.
Then
\be\label{deftildeZ}
\tilde{Z}^{\bP^1}({\bf m}) = {\rm e}^q \cdot Z^{\bP^1}({\bf m}).
\ee
As a corollary to Corollary~\ref{ThmMainP1}, we get the following.

\begin{Corollary}
The corrected partition function $\tilde{Z}^{\bP^1}({\bf m})$ is a KP tau-function.
\end{Corollary}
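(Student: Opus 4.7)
The plan is a direct application of Corollary~\ref{ThmMainP1} together with the elementary scaling property of KP tau-functions. First I would observe that $Z^{\bP^1}({\bf m})$ is the specialization $Z^{\bP^1}({\bf b}+x{\bf 1},{\bf m};\e)|_{{\bf b}={\bf 0},\, x=0}$ (with the values of $\e$ and $q$ fixed as parameters). Corollary~\ref{ThmMainP1} directly asserts that this specialization is a KP tau-function in the times ${\bf T}$ related to ${\bf m}$ by~\eqref{tT30}.

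Next, I would recall that, as noted around~\eqref{linear1004}, multiplying a KP tau-function $\tau({\bf T}({\bf p}))$ by the exponential of an arbitrary affine linear function $C_0 + \sum_{k\ge 1} C_k p_k$ of the KP times produces another KP tau-function; this is immediate from the Hirota bilinear identity~\eqref{bilinearidH}, since the two exponential factors associated with ${\bf T}$ and ${\bf T}'$ contribute $e^{\sum_k C_k (p_k - p_k')}$ to both sides and cancel after the residue is taken in $\lambda$. In particular, the case $C_k = 0$ for $k\ge 1$ says that multiplication by any constant (with respect to the KP times) preserves the KP tau-function property.

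Finally, by~\eqref{deftildeZ} we have $\tilde{Z}^{\bP^1}({\bf m}) = {\rm e}^q \cdot Z^{\bP^1}({\bf m})$, and the Novikov variable $q$ does not enter the identification~\eqref{tT30}: it is a parameter independent of ${\bf T}$ (equivalently of ${\bf m}$). Thus ${\rm e}^q$ is a scalar factor from the viewpoint of the KP times, and the previous paragraph applies with $C_0 = q$ and $C_k = 0$ for $k\ge 1$. Combining this with the first paragraph yields the claim.

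There is essentially no obstacle; the only point one needs to be a little careful about is to confirm that $q$ is being treated as a parameter independent of the KP flow variables, which is consistent with how the Novikov variable enters in the rescaling $\epsilon \to q^{-1/2}\e$, $b_i \to q^{(i-1)/2}b_i$, $m_i \to q^{i/2} m_i$ described just before Corollary~\ref{ThmMainP1}: under this rescaling $q$ is a bookkeeping parameter, not a dynamical time.
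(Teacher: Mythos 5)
Your proof is correct and follows essentially the same route as the paper, which simply presents this statement "as a corollary to Corollary~\ref{ThmMainP1}": since $\tilde{Z}^{\bP^1}({\bf m})={\rm e}^q\cdot Z^{\bP^1}({\bf m})$ by~\eqref{deftildeZ} and ${\rm e}^q$ is a constant with respect to the KP times, the observation around~\eqref{linear1004} applies immediately. Your extra care about $q$ being a bookkeeping parameter is sensible but not a point the paper needed to elaborate.
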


The corrected partition function $\tilde{Z}^{\bP^1}$ has the following
expansion:
\[
\tilde{Z}^{\bP^1}({\bf m}) = \sum_{d\geq 0} q^d\sum_{n\ge0} \sum_{i_1,\dots, i_n \geq 0} \frac{m_{i_1} \cdots m_{i_n}}{n!}
\Biggl\langle\prod_{j=1}^n \tau_{i_j}(2)\Biggr\rangle^{\bullet \bP^1}_d,
\]
where \smash{$\langle\prod_{j=1}^n \tau_{i_j}(2)\rangle^{\bullet \bP^1}_d$} denotes the
{\it not-necessarily connected} GW invariants of~$\bP^1$ of degree~$d$ in the stationary sector.
By using the GW/H correspondence
Okounkov and Pandharipande obtained~\cite{OP1} the following formula:
\be\label{OPformula1004}
\Biggl\langle\prod_{j=1}^n \tau_{i_j}(2)\Biggr\rangle^{\bullet \bP^1}_d
= \sum_{\lambda \in \mathcal{P}_d} \biggl(\frac{\dim V^\lambda}{d!}\biggr)^2
\prod_{j=1}^n \frac{\bp_{i_j+1}(\lambda)}{(i_j+1)!},
\ee
where $V^\lambda$ is the irreducible representation of $S_d$ indexed by $\lambda$,
and $\bp_k(\lambda)$ is defined by
\[
\bp_k(\lambda)
= \sum_{i\geq 1} \biggl[\biggl(\lambda_i-i+\half\biggr)^k - \biggl(-i+\half\biggr)^k\biggr] + \bigl(1-2^{-k}\bigr) \zeta(-k),
\]
where $k\ge1$.
Dubrovin~\cite{Du16} gave a different proof of formula~\eqref{OPformula1004}
using symplectic field theory.
(We would like to mention that $\bp_k(\lambda)$ are generators of shifted symmetric functions
that play a~key role in the Bloch--Okounkov theorem~\cite{BO}, and are explained in detail in,
e.g.,~\cite{BO, Zagier}.)

We now have
\begin{align*}
\tilde{Z}^{\bP^1}({\bf m})
& = \sum_{d\geq 0} q^d \sum_{n\ge0} \sum_{i_1,\dots, i_n \geq 0} \frac{m_{i_1} \cdots m_{i_n}}{n!}
\sum_{\lambda \in \mathcal{P}_d} \biggl(\frac{\dim V^\lambda}{d!}\biggr)^2
\prod_{j=1}^n \frac{\bp_{i_j+1}(\lambda)}{(i_j+1)!} \nn \\
& = \sum_{d\geq 0} \frac{q^d}{d!}
\sum_{\lambda \in \mathcal{P}_d} \frac{\bigl(\dim V^\lambda\bigr)^2}{d!}
\exp \sum_{i\ge0} \frac{m_i}{(i+1)!} \bp_{i+1}(\lambda). 
\end{align*}
When all $m_i$'s are taken to be $0$,
we get the vacuum expectation value
\[
\tilde{Z}^{\bP^1}({\bf m}={\bf 0}) = \sum_{d\geq 0} \frac{q^d}{d!}
\sum_{\lambda \in \mathcal{P}_d} \frac{\bigl(\dim V^\lambda\bigr)^2}{d!} = {\rm e}^q.
\]

By the definition~\eqref{deftildeZ}, we have
\begin{align*}
Z^{\bP^1}({\bf m}) & =
 {\rm e}^{-q} \tilde{Z}^{\bP^1}({\bf m}) \\
& = \sum_{d\geq 0} \frac{q^d}{d!}{\rm e}^{-q}
\sum_{\lambda \in \mathcal{P}_d} \frac{\bigl(\dim V^\lambda\bigr)^2}{d!}
\exp \sum_{i\ge0} \frac{m_i}{(i+1)!} \bp_{i+1}(\lambda).
\end{align*}
Now note that
\smash{$\bigl\{\bigl\{\frac{q^d}{d!}{\rm e}^{-q} \frac{(\dim V^\lambda)^2}{d!} \bigr\}_{\lambda \in \mathcal{P}_d}\bigr\}_{d \geq 0}$}
is the Poissonized Plancherel measure on the set of partitions and
\smash{$\exp \sum_{i\ge0} \frac{m_i}{(i+1)!} \bp_{i+1}$}
is a random variable on the set of partitions,
so the partition function $ Z^{\bP^1}({\bf m})$ is just the expectation value,
\[
 Z^{\bP^1}({\bf m}) = \Corr{\exp \sum_{i\ge0} \frac{m_i}{(i+1)!} \bp_{i+1}(\lambda)}_{\rm Plancherel}.
\]

Now we notice that \smash{$p_k = \frac{m_{k-1}}{(k-1)!}$}, $k\ge1$,
so we get
\[
 Z^{\bP^1}({\bf m}) = \sum_{d\geq 0} \frac{q^d}{d!}{\rm e}^{-q}
\sum_{\lambda \in \mathcal{P}_d} \frac{\bigl(\dim V^\lambda\bigr)^2}{d!}
\exp \sum_{k\ge1} \frac{p_k}{k} \bp_{k}(\lambda).
\]
Recall
\[
\exp \sum_{k\ge1} \frac1k p_k p_k' = \sum_{\mu\in \mathcal{P}} s_\mu({\bf p})s_\mu({\bf p}'),
\]
where $s_\mu$ is the Schur polynomial (cf.\ \eqref{defSchur1}) which has the expression
\be\label{Schurdef2}
s_\mu({\bf p}) = \sum_{\nu\in\mathcal{P}} \frac{\chi^\mu_\nu}{z_\nu} p_\nu, \qquad \mu\in\mathcal{P}
\ee
(see Macdonald~\cite{Mac} for the details).
Here,
$\chi^\mu_\nu$ denotes the character table, $z_\nu\!=\!\prod_{i=1}^\infty i^{m_i(\nu)} m_i(\nu)!$, and
$p_\nu=p_{\nu_1}\cdots p_{\nu_{\ell(\nu)}}$.
It follows that 
\begin{align*}
 Z^{\bP^1}({\bf m}) = \sum_{d\geq 0} \frac{q^d}{d!}{\rm e}^{-q}
\sum_{\lambda\in \mathcal{P}_d} \frac{\bigl(\dim V^\lambda\bigr)^2}{d!}
 \sum_{\mu\in\mathcal{P}} s_\mu({\bf p}) \bs_{\mu}(\lambda)
 = \sum_{\mu\in\mathcal{P}} \corr{\bs_\mu}_{\rm Plancherel} \cdot s_\mu({\bf p}).
\end{align*}
Together with Corollary~\ref{ThmMainP1}, we have the following.

\begin{Theorem}\label{affinemeasure}
The affine coordinates $A_{i,j}^{\bP^1}$ of $Z^{\bP^1}({\bf m})$ are given by the Plancherel expectation value
of $\bs_{(i+1, 1^j)}$:
\[
A_{i,j}^{\bP^1} = (-1)^j \bigl\langle\bs_{(i+1,1^j)}\bigr\rangle_{\rm Plancherel},
\]
or explicitly,
\[
A_{i,j}^{\bP^1} = (-1)^j \sum_{d\geq 0} \frac{q^d}{d!}{\rm e}^{-q}
\sum_{\lambda \in \mathcal{P}_d} \frac{\bigl(\dim V^\lambda\bigr)^2}{d!} \bs_{(i+1,1^j)}(\lambda).
\]
\end{Theorem}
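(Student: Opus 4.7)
The plan is to match the Schur expansion
$Z^{\bP^1}({\bf m}) = \sum_{\mu\in\mathcal{P}} \langle \bs_\mu \rangle_{\rm Plancherel}\, s_\mu({\bf p})$
derived in the paragraph immediately preceding the theorem against the Sato presentation~\eqref{fromaffinetotau}, and simply read off the affine coordinates. By Corollary~\ref{ThmMainP1}, $Z^{\bP^1}({\bf m})$ is a KP tau-function; since the Poissonized Plancherel measure is a probability measure, one has $\langle \bs_\emptyset \rangle_{\rm Plancherel} = 1$, matching the normalization $c_\emptyset = 1$ in~\eqref{fromaffinetotau}, so no overall multiplicative constant intervenes.

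First I would identify $Z^{\bP^1}({\bf m})$ with $Z_V$ for the unique $V\in Gr_0$ whose affine coordinates reproduce the hook-shape Schur coefficients. Since $Z^{\bP^1}$ is a KP tau-function with constant term $1$, its Schur coefficients satisfy the Pl\"ucker relations, so Sato's theorem supplies a $V\in Gr_0$ for which the formula $c_\lambda = (-1)^{n_1+\cdots+n_k}\det(A_{m_i,n_j})$ holds for every partition $\lambda = (m_1,\dots,m_k\,|\,n_1,\dots,n_k)$ written in Frobenius notation. Specializing to the hook partitions $\lambda = (i+1,1^j) = (i\,|\,j)$ this collapses to $c_{(i+1,1^j)} = (-1)^j A^{\bP^1}_{i,j}$, and comparing coefficients of $s_{(i+1,1^j)}({\bf p})$ in the two expressions for $Z^{\bP^1}({\bf m})$ yields
$A^{\bP^1}_{i,j} = (-1)^j \langle \bs_{(i+1,1^j)} \rangle_{\rm Plancherel}$.
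Unpacking the Poissonized Plancherel weights $\frac{q^d}{d!} \mathrm{e}^{-q} \frac{(\dim V^\lambda)^2}{d!}$ then reproduces the explicit sum in the second half of the statement.

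The main potential obstacle is the uniqueness of the identification $Z^{\bP^1} = Z_V$ \emph{without} the ambiguity of a multiplier of the form~\eqref{linear1004}. This is resolved by the fact that the Plancherel expansion is already in hand as an explicit Schur series: equating it with~\eqref{fromaffinetotau} is an equality of formal power series in $\bf p$, not merely up to an exponential of an affine-linear function, and the Pl\"ucker relations (automatic from the KP hierarchy via the bilinear identities~\eqref{bilinearidH}) guarantee that the full family of $c_\lambda$'s is consistently determined by the hook coefficients. Hence no exponential correction can appear, and the theorem reduces to the coefficient matching above.
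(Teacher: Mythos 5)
Your proposal is correct and follows essentially the same route as the paper: the paper likewise derives the Schur expansion $Z^{\bP^1}({\bf m})=\sum_{\mu}\langle\bs_\mu\rangle_{\rm Plancherel}\,s_\mu({\bf p})$ from the Okounkov--Pandharipande formula via the Cauchy identity, invokes Corollary~\ref{ThmMainP1} to identify this with the expansion~\eqref{fromaffinetotau} of a big-cell tau-function, and reads off $A^{\bP^1}_{i,j}=(-1)^j c_{(i+1,1^j)}$ from the hook coefficients. Your additional remarks on the normalization $c_\emptyset=1$ and the absence of an affine-linear exponential ambiguity are sound and merely make explicit what the paper leaves implicit.
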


Combining Theorems~\ref{affine1} and~\ref{affinemeasure}, we have the following.

\begin{Corollary}\label{coridrepele}
The following identity holds:
\begin{align}
&\frac{1}{\xi-\eta} + \sum_{i,j\ge0} (-1)^j \sum_{d\geq 0} \frac{q^d}{d!}{\rm e}^{-q}
\sum_{\lambda \in \mathcal{P}_d} \frac{\bigl(\dim V^\lambda\bigr)^2}{d!} \frac{\bs_{(i+1,1^j)}(\lambda)}{\xi^{j+1} \eta^{i+1}} \nonumber\\
&\qquad{}= B(\xi, \eta;q) \,
\frac
{{\rm e}^{-\bigl(\log \frac{\Gamma(\xi+\frac{1}{2})}{\sqrt{2\pi}} -\xi\log \xi + \xi\bigr)}}
{{\rm e}^{-\bigl(\log \frac{\Gamma(\eta+\frac{1}{2})}{\sqrt{2\pi}} -\eta\log \eta + \eta\bigr)}},\label{id1006}
\end{align}
where $\Gamma\bigl(\xi+\frac{1}{2}\bigr)$ is understood as its asymptotic expansion for $\xi$ large with $\operatorname{arg} \xi <\pi-\epsilon$, and~$B(\xi,\eta;q)$ is given by~\eqref{Dlargeabexpansion} $($cf.\ $\eqref{anaB})$.
\end{Corollary}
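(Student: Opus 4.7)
The statement is a direct consequence of the two immediately preceding theorems, and the proof should proceed simply by equating the two formulas they give for the same object, namely the affine coordinates $A^{\bP^1}_{i,j}(q)$ of the KP tau-function $Z^{\bP^1}({\bf m})$ (equivalently $Z^{\bP^1}_1({\bf 0},{\bf m};1,q)$).

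The plan is as follows. First, I would rewrite the identity of Theorem~\ref{affine1} by moving the term $\frac{1}{\xi-\eta}$ to the left-hand side, so that it takes the form
\[
\frac{1}{\xi-\eta} + \sum_{i,j\ge 0} A^{\bP^1}_{i,j}(q)\, \xi^{-j-1}\eta^{-i-1}
= B(\xi,\eta;q)\,\frac{{\rm e}^{-\bigl(\log\frac{\Gamma(\xi+\frac12)}{\sqrt{2\pi}}-\xi\log\xi+\xi\bigr)}}{{\rm e}^{-\bigl(\log\frac{\Gamma(\eta+\frac12)}{\sqrt{2\pi}}-\eta\log\eta+\eta\bigr)}}.
\]
Then I would substitute into the left-hand side the explicit formula for $A^{\bP^1}_{i,j}(q)$ furnished by Theorem~\ref{affinemeasure}, namely
\[
A^{\bP^1}_{i,j}(q)=(-1)^j\sum_{d\ge 0}\frac{q^d}{d!}{\rm e}^{-q}\sum_{\lambda\in\mathcal{P}_d}\frac{(\dim V^\lambda)^2}{d!}\,\bs_{(i+1,1^j)}(\lambda).
\]
Interchanging the (formal) double summation over $(i,j)$ with the sums over $d$ and $\lambda$ immediately yields~\eqref{id1006}.

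There is essentially no obstacle. The only point that would warrant a brief comment is the interpretation of the double series in $\xi^{-1},\eta^{-1}$ as a formal asymptotic expansion (consistent with the interpretation of $\Gamma(\xi+\tfrac12)$ in Theorem~\ref{affine1}), so that the rearrangement of summations is legitimate term by term. Apart from that, the corollary is a mechanical combination of the two theorems, and I would state the proof as a single short paragraph invoking both of them.
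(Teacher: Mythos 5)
Your proof is correct and is exactly the paper's argument: the corollary is stated immediately after the sentence ``Combining Theorems~\ref{affine1} and~\ref{affinemeasure}, we have the following,'' i.e., one equates the generating series of the affine coordinates $A^{\bP^1}_{i,j}(q)$ from Theorem~\ref{affine1} with their Plancherel-average expression from Theorem~\ref{affinemeasure}. Your remark about interpreting everything as formal asymptotic series is a reasonable (and harmless) addition.
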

For example,
\begin{gather*}
A^{\bP^1}_{0,0} = \sum_{d\geq 0} \frac{q^d}{d!}{\rm e}^{-q}
\sum_{\lambda \in \mathcal{P}_d} \frac{\bigl(\dim V^\lambda\bigr)^2}{d!} \bs_{(1)}(\lambda)\\ \hphantom{A^{\bP^1}_{0,0} }{}
 = {\rm e}^{-q}\biggl[-\frac{1}{24} + \frac{23}{24}q + \biggl( \biggl(\frac{1}{2!}\biggr)^2 \frac{47}{24}
+ \biggl(\frac{1}{2!}\biggr)^2 \frac{47}{24}\biggr) q^2 \\ \hphantom{A^{\bP^1}_{0,0} = {\rm e}^{-q}\biggl[}{}
 + \biggl( \biggl(\frac{1}{3!}\biggr)^2 \frac{71}{24}
+ \biggl(\frac{2}{3!}\biggr)^2\frac{71}{24}
+ \biggl(\frac{1}{3!}\biggr)^2 \frac{71}{24}\biggr) q^3 + \cdots\biggr] \\\hphantom{A^{\bP^1}_{0,0} }{}
 = -\frac{1}{24}+q,
\\
A^{\bP^1}_{1,0} = \sum_{d\geq 0} \frac{q^d}{d!}{\rm e}^{-q}
\sum_{\lambda \in \mathcal{P}_d} \frac{\bigl(\dim V^\lambda\bigr)^2}{d!} \bs_{(2)}(\lambda)\\ \hphantom{A^{\bP^1}_{1,0}}{}
 = \sum_{d\geq 0} \frac{q^d}{d!}{\rm e}^{-q}
\sum_{\lambda \in \mathcal{P}_d} \frac{\bigl(\dim V^\lambda\bigr)^2}{d!} \biggl(\frac{1}{2}\bp_{(2)}(\lambda)
+ \frac{1}{2} \bp_1(\lambda)^2\biggr) \\ \hphantom{A^{\bP^1}_{1,0}}{}
 = \frac{1}{2} {\rm e}^{-q} \biggl[0+ 0 \cdot q + \biggl( \biggl(\frac{1}{2!}\biggr)^2 \cdot 2
+ \biggl(\frac{1}{2!}\biggr)^2 \cdot (-2) \biggr) q^2 \\ \hphantom{A^{\bP^1}_{1,0}=\frac{1}{2} {\rm e}^{-q} \biggl[}{}
 + \bigg( \biggl(\frac{1}{3!}\biggr)^2 \cdot 6
+ \biggl(\frac{2}{3!}\biggr)^2\cdot 0
+ \biggl(\frac{1}{3!}\biggr)^2 \cdot (-6) \biggr) q^3 + \cdots \\ \hphantom{A^{\bP^1}_{1,0}=\frac{1}{2} {\rm e}^{-q} \biggl[}{}
 + \biggl(-\frac{1}{24}\biggr)^2 + \biggl(\frac{23}{24}\biggr)^2q
+ \biggl( \biggl(\frac{1}{2!}\biggr)^2 \biggl(\frac{47}{24}\biggr)^2
+ \biggl(\frac{1}{2!}\biggr)^2 \biggl(\frac{47}{24}\biggr)^2 \biggr) q^2 \\ \hphantom{A^{\bP^1}_{1,0}=\frac{1}{2} {\rm e}^{-q} \biggl[}{}
 + \biggl( \biggl(\frac{1}{3!}\biggr)^2 \biggl(\frac{71}{24}\biggr)^2
+ \biggl(\frac{2}{3!}\biggr)^2 \biggl(\frac{71}{24}\biggr)^2
+ \biggl(\frac{1}{3!}\biggr)^2 \biggl(\frac{71}{24}\biggr)^2\biggr) q^3 + \cdots\biggr] \\ \hphantom{A^{\bP^1}_{1,0}}{}
 = \frac{1}{1152}+\frac{11q}{24} + \frac{q^2}{2}.
\end{gather*}
These match with the right-hand side of~\eqref{id1006}.

Inspired by Han's conjecture~\cite{Han}, it was proved by Stanley~\cite{S} that
\[
\corr{\bp_\mu}_d:= \sum_{\lambda \in \mathcal{P}_d} \frac{\bigl(\dim V^\lambda\bigr)^2}{d!}
\bp_{\mu}(\lambda)
\]
is a polynomial in $d$ of degree $|\mu|$, where \smash{$\bp_{\mu}(\lambda):=\prod_{j=1}^{\ell(\mu)}\bp_{\mu_j}(\lambda)$}.
Note that Han's conjecture is on the polynomiality of $\corr{\bp_k}_d$ (confirmed also in~\cite{Pan}). It follows that
\[
\corr{\bs_\mu}_d:= \sum_{\lambda \in \mathcal{P}_d} \frac{\bigl(\dim V^\lambda\bigr)^2}{d!}
\bs_{\mu}(\lambda)
\]
is a polynomial in $d$ of degree $|\mu|$, where $\bs_{\mu}(\lambda):=s_\mu (\bp(\lambda))$ with $s_\mu$
denoting Schur functions as in~\eqref{Schurdef2} or~\eqref{defSchur1}.
For example,
\begin{gather*}
\bigl\langle\bp_{(1)}\bigr\rangle_d = d- \frac{1}{24}, \\
\bigl\langle\bp_{(2)}\bigr\rangle_d = 0, \\
\bigl\langle\bp_{(1^2)}\bigr\rangle_d = \biggl(d-\frac{1}{24}\biggr)^2, \\
\bigl\langle\bp_{(3)}\bigr\rangle_d = \frac{3d^2}{2}
- \frac{5d}{4} + \frac{7}{960},\\
\bigl\langle\bp_{(21)}\bigr\rangle_d = 0, \\
\bigl\langle\bp_{(1^3)}\bigr\rangle_d = d^3-\frac{1}{8}d^2
+ \frac{1}{192}d -\frac{1}{13824},
\end{gather*}
and from these explicit expressions we get
\begin{gather*}
\bigl\langle\bs_{(1)}\bigr\rangle_d = d- \frac{1}{24}, \\
\bigl\langle\bs_{(2)}\bigr\rangle_d = \bigl\langle\bs_{(1^2)}\bigr\rangle_d = \biggl(d-\frac{1}{24}\biggr)^2, \\
\bigl\langle\bs_{(3)}\bigr\rangle_d = \bigl\langle\bs_{(1^3)}\bigr\rangle
=\frac{d^3}{6} +\frac{23}{48}d^2-\frac{479}{1152}d
+\frac{1003}{414720}, \\
\bigl\langle\bs_{(21)}\bigr\rangle_d =
\frac{d^3}{3} - \frac{13}{24}d^2
+\frac{241}{576}d
-\frac{509}{207360} .
\end{gather*}
When $\mu$ are hook partitions, Corollary~\ref{coridrepele} not only confirms polynomiality
of $\corr{\bs_\mu}_d$ (so of the original Han's conjecture as well) but also leads to
elementary formulas for them. The explicit expressions, as well as general ones using results in~\cite{DY, DY2, DYZ, Marchal}, will be given
in a subsequent publication.

Note that by~\eqref{onepoint1004},
we have
\[
Z^{\bP^1}({\bf m}) = Z^{+, \bP^1}({\bf m}) \cdot \exp \sum_{k \geq 1} \bigl(1-2^{-k}\bigr) \zeta(-k) \frac{m_{k-1}}{(k-1)!},
\]
so we can also define
\[
\hat{\bp}_k(\lambda)
= \sum_{i\geq 1}\biggl[\biggl(\lambda_i-i+\half\biggr)^k - \biggl(-i+\half\biggr)^k\biggr].
\]
Then we repeat everything we have done in this subsection %
by replacing
every $\bp_k$ with $\hat{\bp}_k$ to~get
\ben
Z^{\bP^1,+}({\bf m})
& = & \sum_{d\geq 0} \frac{q^d}{d!}{\rm e}^{-q}
\sum_{\lambda\in \mathcal{P}_d} \frac{\bigl(\dim V^\lambda\bigr)^2}{d!}
\exp \sum_{i\ge0} \frac{m_i}{(i+1)!} \hat{\bp}_{i+1}(\lambda),
\een
and
\[
Z^{\bP^1,+}({\bf m}) = \Corr{\exp \sum_{i\ge0} \frac{m_i}{(i+1)!}
 \hat{\bp}_{i+1}(\lambda)}_{\rm Plancherel}.
\]
And if we set
\[
\hat{\bs}_\mu = \sum_{\nu\in\mathcal{P}} \frac{\chi^\mu_\nu}{z_\nu} \hat{\bp}_\nu,
\]
then we have
\begin{align*}
Z^{\bP^1, +}({\bf m}) = \sum_{d\geq 0} \frac{q^d}{d!}{\rm e}^{-q}
\sum_{\lambda\in \mathcal{P}_d} \frac{\bigl(\dim V^\lambda\bigr)^2}{d!}
 \sum_{\mu\in\mathcal{P}} s_\mu({\bf p}) \hat{\bs}_{\mu}(\lambda)
 = \sum_{\mu\in\mathcal{P}} \corr{\hat{\bs}_\mu}_{\rm Plancherel} \cdot s_\mu({\bf p}).
\end{align*}

\begin{Theorem}\label{affinemeasure2}
The affine coordinates \smash{$A_{i,j}^{\bP^1, +}$} of $Z^{\bP^1,+}({\bf m})$ are given by the Plancherel expectation value
of $\hat{\bs}_{(i+1, 1^j)}$,
\[
A_{i,j}^{\bP^1, +} = (-1)^j \bigl\langle\hat{\bs}_{(i+1,1^j)}\bigr\rangle_{\rm Plancherel},
\]
or explicitly,
\[
A_{i,j}^{\bP^1, +} = (-1)^j \sum_{d\geq 0} \frac{q^d}{d!}{\rm e}^{-q}
\sum_{\lambda\in \mathcal{P}_d} \frac{\bigl(\dim V^\lambda\bigr)^2}{d!} \hat{\bs}_{(i+1,1^j)}(\lambda).
\]
\end{Theorem}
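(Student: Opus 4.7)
The plan is to read off the affine coordinates directly from the Schur expansion displayed just before the theorem and match it with formula~\eqref{fromaffinetotau}, in complete analogy with the proof of Theorem~\ref{affinemeasure}. The strategy mirrors the one used for $Z^{\bP^1}({\bf m})$: pass from the Plancherel-averaged expression for $Z^{\bP^1,+}$ to its Schur-function expansion in the KP times, and then identify the coefficients of hook Schur functions with the affine coordinates via the Frobenius-determinant formula.

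First, I would take as input the identity
\[
Z^{\bP^1,+}({\bf m}) = \sum_{\mu\in\mathcal{P}} \corr{\hat{\bs}_\mu}_{\rm Plancherel} \cdot s_\mu({\bf p})
\]
which is established in the paragraph immediately preceding the theorem statement by combining the Okounkov--Pandharipande formula~\eqref{OPformula1004} (with every $\bp_k$ replaced by $\hat{\bp}_k$) with the expansion \eqref{Schurdef2} of the Schur polynomials and Proposition~\ref{positive}. This provides the Schur expansion of $Z^{\bP^1,+}({\bf m})$ whose coefficients $c_\mu$ are precisely the Plancherel expectation values $\corr{\hat{\bs}_\mu}_{\rm Plancherel}$.

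Second, since $Z^{\bP^1,+}({\bf m},q)$ is by its very definition the KP tau-function associated to the point in $Gr_0$ with affine coordinates $A^{\bP^1,+}_{i,j}(q)$ (cf.\ \eqref{eqn:hat-A} and the discussion following it), formula~\eqref{fromaffinetotau} applies. In particular, for a hook partition $\lambda=(i+1,1^j)$, the Frobenius symbol is simply $(i\,|\,j)$ with $k=1$, so the determinantal formula collapses to $c_{(i+1,1^j)} = (-1)^j A^{\bP^1,+}_{i,j}$. Comparing this with the Schur coefficient obtained in the first step yields
\[
(-1)^j A^{\bP^1,+}_{i,j} = \bigl\langle \hat{\bs}_{(i+1,1^j)}\bigr\rangle_{\rm Plancherel},
\]
which is the first claimed identity; expanding the Plancherel expectation value using its definition gives the second, explicit expression.

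There is essentially no main obstacle: the only subtlety is the bookkeeping of signs when passing between the Frobenius coordinates $(i\,|\,j)$ of the hook $(i+1,1^j)$ and the sign $(-1)^{n_1+\cdots+n_k}$ in~\eqref{fromaffinetotau}, which for $k=1$ and $n_1=j$ contributes exactly the factor $(-1)^j$ appearing in the statement. All nontrivial analytic content has already been packaged into the combinatorial identity between $Z^{\bP^1,+}({\bf m})$ and the Plancherel average of $\exp\sum_{i\geq 0}\frac{m_i}{(i+1)!}\hat{\bp}_{i+1}$, and into the fact that $Z^{\bP^1,+}$ is a KP tau-function, so the proof reduces to extracting the coefficient of $s_{(i+1,1^j)}$.
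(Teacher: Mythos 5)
Your proposal is correct and follows essentially the same route as the paper: the paper obtains the theorem by reading off the coefficients of the hook Schur functions $s_{(i+1,1^j)}$ in the expansion $Z^{\bP^1,+}({\bf m})=\sum_{\mu}\langle\hat{\bs}_\mu\rangle_{\rm Plancherel}\,s_\mu({\bf p})$ derived in the preceding paragraph, and identifying them with $(-1)^jA^{\bP^1,+}_{i,j}$ via formula~\eqref{fromaffinetotau}. Your handling of the sign through the Frobenius notation $(i\,|\,j)$ and your appeal to Proposition~\ref{positive} together with~\eqref{onepoint1004} to justify the Plancherel form of $Z^{\bP^1,+}$ match the paper's (largely implicit) argument.
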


Combining \eqref{eqn:hat-A} and Theorem~\ref{affinemeasure2}, we have the following.
\begin{Corollary} \label{cor:Plancherel}
The following identity holds:
\begin{align}
&\sum_{i,j} (-1)^j \sum_{d\geq 0} \frac{q^d}{d!}{\rm e}^{-q}
\sum_{\lambda\in \mathcal{P}_d} \frac{\bigl(\dim V^\lambda\bigr)^2}{d!} \frac{\hat{\bs}_{(i+1,1^j)}(\lambda)}{\xi^{j+1} \eta^{i+1}} = B(\xi,\eta;q), \label{id21004}
\end{align}
where $B(\xi,\eta;q)$ is given by~\eqref{Dlargeabexpansion}.
\end{Corollary}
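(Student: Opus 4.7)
The plan is to prove Corollary \ref{cor:Plancherel} by a direct substitution combining the two immediately preceding results: the generating-series characterization \eqref{eqn:hat-A} of the affine coordinates $A^{\bP^1,+}_{i,j}(q)$ of the normalized stationary $\bP^1$ tau-function $Z^{\bP^1,+}({\bf m})$, and Theorem \ref{affinemeasure2}, which expresses those same affine coordinates as Plancherel expectations of shifted-symmetric hook Schur functions $\hat{\bs}_{(i+1,1^j)}$. No new analytic input is needed; the identity is a reformulation packaging these two ingredients.

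Concretely, I would first recall from the definition \eqref{eqn:hat-A} that
\[
\sum_{i,j\ge 0} A^{\bP^1,+}_{i,j}(q)\, \xi^{-j-1}\eta^{-i-1} \;=\; B(\xi,\eta;q) - \frac{1}{\xi-\eta}.
\]
Then I would substitute the formula
\[
A^{\bP^1,+}_{i,j}(q) \;=\; (-1)^j \sum_{d\ge 0} \frac{q^d}{d!}\, {\rm e}^{-q} \sum_{\lambda\in \mathcal{P}_d} \frac{(\dim V^\lambda)^2}{d!}\, \hat{\bs}_{(i+1,1^j)}(\lambda)
\]
from Theorem \ref{affinemeasure2} term by term into the left-hand side, which reproduces exactly the double series displayed in \eqref{id21004}. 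The final step is to observe that the resulting equality matches the formula for $B(\xi,\eta;q)$ given by \eqref{Dlargeabexpansion}, once the Cauchy kernel contribution is identified with the $\frac{1}{\xi-\eta}$ term that is isolated in \eqref{eqn:hat-A}.

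The only real care point, and the closest thing to an obstacle, is bookkeeping of the $(-1)^j$ sign and the index conventions. The sign in Theorem \ref{affinemeasure2} originates from the Frobenius-notation identity $c_\lambda=(-1)^j A_{i,j}$ used in \eqref{fromaffinetotau} for hook partitions $\lambda=(i+1,1^j)=(i|j)$, while the sign in \eqref{eqn:hat-A} is fixed by the explicit expansion \eqref{Dlargeabexpansion}. I would verify that these are compatible by comparing a low-order coefficient (say $A^{\bP^1,+}_{0,0}$ or $A^{\bP^1,+}_{1,0}$) against the explicit Plancherel computations already tabulated in the text. Beyond that consistency check, the corollary follows purely formally and requires no further work.
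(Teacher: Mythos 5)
Your proposal is correct and is exactly the paper's proof: the corollary is obtained by substituting the Plancherel-average expression for $A^{\bP^1,+}_{i,j}(q)$ from Theorem~\ref{affinemeasure2} into the defining generating series \eqref{eqn:hat-A}. Your care point about the Cauchy kernel is well taken, since the identity \eqref{id21004} must indeed be read as an equality of the coefficients of $\xi^{-j-1}\eta^{-i-1}$ (the term $\frac{1}{\xi-\eta}$ being isolated exactly as in \eqref{eqn:hat-A}), and the sign conventions check out against the tabulated low-order examples just as you suggest.
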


For example,
\begin{gather*}
A_{0,0}^{\bP^1,+} = \sum_{d\geq 0} \frac{q^d}{d!}{\rm e}^{-q}
\sum_{\lambda\in \mathcal{P}_d} \frac{\bigl(\dim V^\lambda\bigr)^2}{d!}
\hat{\bs}_{(1)}(\lambda)\\ \hphantom{A_{0,0}^{\bP^1,+}}{}
 = {\rm e}^{-q}\biggl[0 + 1\cdot q + \biggl(\biggl(\frac{1}{2!}\biggr)^2 \cdot 2
+ \biggl(\frac{1}{2!}\biggr)^2 \cdot 2\biggr) q^2 \\ \hphantom{A_{0,0}^{\bP^1,+}={\rm e}^{-q}\biggl[}{}
 + \biggl( \biggl(\frac{1}{3!}\biggr)^2 \cdot 3
+ \biggl(\frac{2}{3!}\biggr)^2\cdot 3
+ \biggl(\frac{1}{3!}\biggr)^2 \cdot 3\biggr) q^3 + \cdots\biggr]
 = q,
\\
A_{1,0}^{\bP^1,+} = \sum_{d\geq 0} \frac{q^d}{d!}{\rm e}^{-q}
\sum_{\lambda\in \mathcal{P}_d} \frac{\bigl(\dim V^\lambda\bigr)^2}{d!}
\hat{\bs}_{(2)}(\lambda)\\ \hphantom{A_{1,0}^{\bP^1,+}}{}
 = \sum_{d\geq 0} \frac{q^d}{d!}{\rm e}^{-q}
\sum_{\lambda\in \mathcal{P}_d} \frac{\bigl(\dim V^\lambda\bigr)^2}{d!} \biggl(\frac{1}{2}\hat{\bp}_{(2)}(\lambda)
+ \frac{1}{2} \hat{\bp}_{(1)}(\lambda)^2\biggr) \\ \hphantom{A_{1,0}^{\bP^1,+}}{}
 = \frac{1}{2} {\rm e}^{-q}\biggl[0+ 0 \cdot q + \biggl( \biggl(\frac{1}{2!}\biggr)^2 \cdot 2
+ \biggl(\frac{1}{2!}\biggr)^2 \cdot (-2) \biggr) q^2 \\ \hphantom{A_{1,0}^{\bP^1,+}=\frac{1}{2} {\rm e}^{-q}\biggl[}{}
 + \biggl( \biggl(\frac{1}{3!}\biggr)^2 \cdot 6
+ \biggl(\frac{2}{3!}\biggr)^2\cdot 0
+ \biggl(\frac{1}{3!}\biggr)^2 \cdot (-6) \biggr) q^3 + \cdots \\ \hphantom{A_{1,0}^{\bP^1,+}=\frac{1}{2} {\rm e}^{-q}\biggl[}{}
 + 0^2q + 1^2q
+ \biggl( \biggl(\frac{1}{2!}\biggr)^2 \cdot 2^2
+ \biggl(\frac{1}{2!}\biggr)^2 \cdot 2^2 \biggr) q^2 \\ \hphantom{A_{1,0}^{\bP^1,+}=\frac{1}{2} {\rm e}^{-q}\biggl[}{}
 + \biggl( \biggl(\frac{1}{3!}\biggr)^2 \cdot 3^2
+ \biggl(\frac{2}{3!}\biggr)^2 \cdot 3^2
+ \biggl(\frac{1}{3!}\biggr)^2 \cdot 3^2\biggr) q^3 + \cdots\biggr]
 = \frac{q}{2} + \frac{q^2}{2}.
\end{gather*}
These match with the right-hand side of~\eqref{id21004}. 
The Plancherel averages of $\hat{\bp}_\mu$ and $\hat{\bs}_\mu$
are again polynomials, but they seem to be simpler than
the averages of $\bp_\mu$ and $\bs_\mu$.
For example,
\begin{alignat*}{4}
&\corr{\hat{\bp}_1}_d = d,\qquad&&
\corr{\hat{\bp}_2}_d = 0,\qquad&&
\bigl\langle\hat{\bp}\bigr\rangle_1^2 = d^2,& \\
&\corr{\hat{\bp}_3}_d = \frac{3}{2}d^2 - \frac{5d}{4},\qquad&&
\corr{\hat{\bp}_2\hat{\bp}_1}_d = 0,\qquad&&
\bigl\langle\hat{\bp}_1^3\bigr\rangle_d = d^3, &
\end{alignat*}
and from these we get
\begin{gather*}
\bigl\langle\hat{\bs}_{(1)}\bigr\rangle_d = d, \\
\bigl\langle\hat{\bs}_{(2)}\bigr\rangle_d = \bigl\langle\hat{\bs}_{(1^2)}\bigr\rangle
= \frac{d^2}{2},\\
\bigl\langle\hat{\bs}_{(3)}\bigr\rangle_d = \bigl\langle\hat{\bs}_{(1^3)}\bigr\rangle
= \frac{d^3}{6} +\frac{1}{2}d^2-\frac{5}{12}d,
\\
\bigl\langle\hat{\bs}_{(21)}\bigr\rangle_d =
\frac{d^3}{3} - \frac{1}{2}d^2
+\frac{5}{12}d.
\end{gather*}

\subsection*{Acknowledgements}
One of the authors D.Y. is grateful to
Marco Bertola, Boris Dubrovin and Youjin Zhang for their advice. We thank
Don Zagier for several insightful and helpful comments. We also
thank the anonymous referees for constructive comments that help to improve the presentation.
The work is supported by NSFC (No.~12371254, No.~11890662) and
CAS No. YSBR-032.

\pdfbookmark[1]{References}{ref}
\LastPageEnding

\end{document}